\newtheorem{theorem}{Theorem}
\newtheorem{proposition}{Proposition}
\newtheorem{rmk}{Remark}
\newtheorem{definition}{Definition}
\newcommand{\bbC}{{\mathbb C}}
\newcommand{\bbP}{{\mathbb P}}
\newcommand{\cC}{{\mathcal C}}
\newcommand{\cE}{{\mathcal E}}
\newcommand{\cI}{{\mathcal I}}
\newcommand{\cK}{{\mathcal K}}
\newcommand{\cP}{{\mathcal P}}
\newcommand{\cQ}{{\mathcal Q}}
\newcommand{\X}{{\bf x}}
\def\t{\widetilde}
\title[{Manin involutions for elliptic pencils and discrete integrable systems}]{Manin involutions for elliptic pencils\\ and discrete integrable systems}
\author{Matteo Petrera, Yuri B. Suris, Kangning Wei \and Ren\'e Zander }
\thanks{E-mail: {\tt  petrera.matteo@googlemail.com, suris@math.tu-berlin.de, weikangning12@gmail.com, zander@math.tu-berlin.de}}
\begin{document}

\maketitle

\begin{center}
{\footnotesize{
Institut f\"ur Mathematik, MA 7-1\\
Technische Universit\"at Berlin, Str. des 17. Juni 136,
10623 Berlin, Germany
}}
\end{center}

\begin{abstract}
We contribute to the algebraic-geometric study of discrete integrable systems generated by planar birational maps: 

(a) we
find geometric description of Manin involutions for elliptic pencils consisting of curves of higher degree, birationally equivalent to cubic pencils (Halphen pencils of index 1), and 

(b) we characterize special geometry of base points ensuring that certain compositions of Manin involutions are integrable maps of low degree (quadratic Cremona maps). In particular, we identify some integrable Kahan discretizations as compositions of Manin involutions for elliptic pencils of higher degree.
\end{abstract}

\section{Introduction}
Intimate relation of the theory of integrable systems to algebraic geometry is well appreciated in these days. In the present paper, we address this relation for a very basic class of integrable systems, namely for discrete integrable systems generated by birational maps of $\mathbb{CP}^2$ with a rational integral of motion and an invariant measure with a rational density (whereas the emphasis is put on the integral of motion). In such a system, orbits are confined to invariant curves (level sets of the integral), and on each invariant curve the map induces an automorphism. 

For general reasons, invariant curves must have genus zero or one, since only in these cases the induced automorphisms on the invariant curves can be of infinite order (non-periodic). Our main object of interest will be rational elliptic surfaces (i.e., surfaces birationally equivalent to a plane, admitting a fibration by elliptic curves). A classification of pencils of elliptic curves in a plane was given by Bertini, a modern proof of this result is due to Dolgachev \cite{Dolgachev}. It says that any such pencil is birationally equivalent to a Halphen pencil of index $m\in\mathbb N$, in which a generic curve is of degree $3m$ and has multiplicity $m$ at each of nine base points.

Planar maps preserving pencils of elliptic curves appeared over and over again in the theory of discrete integrable systems. Probably, the most prominent example is given by QRT maps \cite{QRT, Tsuda, Dui}, which preserve pencils of biquadratic curves. Further examples are given by (autonomous versions of) discrete Painlev\'e equations \cite{Sakai,  KNY}, as well as the so called HKY maps which preserve pencils of curves of higher degrees \cite{HKY}. Recently, further examples appeared in the context of the so called Kahan discretization \cite{K}, \cite{PPS2, PZ}, \cite{CMOQ1, CMOQ2, CMOQ4}. A sort of a classification of such maps, based on the Dolgachev's classification of rational elliptic surfaces, was given in \cite{CT1} and sounds almost tautologically: a birational map preserving an Halphen pencil (of index $m$) either preserves each fiber or interchanges the fibers in a nontrivial way.

In the present paper, we are occupied with a construction of integrable maps preserving a pencil of elliptic curves, based only on the pencil itself. The basic idea is to compose two (non-commuting) {\em birational involutions} preserving the pencil. This construction is almost obvious for QRT maps, where one can always use the horizontal and the vertical switches as suitable involutions (the horizontal switch assigns to any point of a biquadratic curve the second intersection point of the curve with the horizontal line through the original point; the definition of the vertical switch is analogous). In \cite{KCMMOQ, KMQ} this construction was extended to a (projectively equivalent) case of maps preserving a pencil of quartic curves with two double points, and it was observed that the corresponding involutions are closely related to the so called {\em Manin involutions}, which were introduced in \cite{Manin} for pencils of cubic curves. Recall that, for a pencil of cubic curves with nine base points $p_1,\ldots,p_9$, the Manin involution centered at $p_1$ assigns to each point $p$ the third intersection point of the line $(p_1p)$ with the curve of the pencil passing through $p$. In \cite{PSS}, it was shown that Kahan discretizations of canonical Hamiltonian systems with cubic Hamiltonian can be characterized as compositions of Manin involutions in the case of a special geometry of the base points of the cubic pencil, see \cite{PS4} for a related result. We also mention that compositions of Manin involutions for cubic pencils appeared in the theory of discrete integrable systems already in \cite{VGR, KMNOY}.

Birational involutions of the plane are classical and well studied objects in algebraic geometry. Their classification was given by Bertini \cite{Bertini} and says that every non-trivial birational involution of $\mathbb P^2$ is birationally conjugate to exactly one of the following: a de Jonqui\`eres involution of degree $d>2$ (which fixes an irreducible curve of degree $d$ with a unique singular point $p$ which is an ordinary multiple point of multiplicity $d-2$), a Geiser involution, or a Bertini involution.  We refrain from giving precise definitions here and refer the reader to the original source. A modern proof of the Bertini classification was given in \cite{BB}. The place of Manin involutions in the Bertini classification is as follows: a Manin involution centered at $p_1$ for a pencil of cubic curves with nine base points $p_1,\ldots,p_9$ is a de Jonqui\`eres involution with center $p_1$ defined by a hyperelliptic curve of degree $d=5$ and of genus 3, with a triple point $p_1$ and eight Weierstrass points $p_2,\ldots,p_9$ (we are indebted to I. Dolgachev for explaining this).

In the present paper, our goal is to use Manin involutions for elliptic pencils to construct integrable dynamical systems. We start by finding a geometric formulation of Manin involutions for elliptic pencils consisting of curves of higher degree. This is achieved via a birational conjugation from the canonical construction for cubic pencils. It seems, however, that the geometric construction directly in terms of the original pencils is not available in the literature. The most non-trivial contribution consists in  finding the conditions under which the involutions and their compositions are of low degree, thus producing simple and attractive examples of integrable birational maps.

\section{Elliptic pencils and cubic pencils}

Throughout the paper, we work over the field $\mathbb C$. We consider pencils of curves in $\bbP^2$, i.e., families of curves $\cP=\{C_\lambda\}$ parametrized by $\lambda\in\bbP^1$,
$$
C_\lambda=\Big\{[x_0:x_1:x_2]\in\mathbb{P}^2: F(x_0,x_1,x_2)+\lambda G(x_0,x_1,x_2)=0\Big\}.
$$
Here $F$, $G$ are two homogeneous polynomials of degree $d$.  The points of the set
$$
B=\Big\{[x_0:x_1:x_2]\in\mathbb{P}^2: F(x_0,x_1,x_2)=G(x_0,x_1,x_2)=0\Big\}
$$
are called \emph{base points} of the pencil $\cP$. As usual, they are counted with multiplicities. We will assume that the multiplicities of each base point on both curves $F=0$ and $G=0$ (and then on all curves of the pencil) are the same. The \emph{type} of the pencil is then 
$$
(d;(n_1)^1(n_2)^2(n_3)^3\ldots)
$$
where $d$ is the degree of the curves of the pencil, $n_1$ the number of simple base points, $n_2$ the number of double base points, $n_3$ the number of triple base points and so on. The pencil itself will be denoted by 
$$
\cP\big(d; p_1^{m_1},p_2^{m_2},\ldots p_N^{m_N}\big),
$$
which refers to the degree $d$ and the list of base points $p_i$ with their respective multiplicities $m_i$, so that $N=n_1+n_2+n_3+\ldots$. Multiplicities $m_i=1$ are usually omitted. 

Counting the intersection numbers, we get:
\begin{equation}\label{pencil intersect number}
d^2=\sum_k n_k k^2.
\end{equation}
Through any point $[x_0:x_1:x_2]\in \bbP^2\setminus B$, there passes a unique curve $C_\lambda$ of the pencil, with $\lambda=-F(x_0,x_1,x_2)/G(x_0,x_1,x_2)$.

Our main interest is in the \emph{elliptic pencils}, for which generic curves of the pencil are of genus $g=1$. According to the degree-genus formula, the genus of irreducible curves of the pencil is given by:
\begin{equation}\label{pencil genus}
g=\frac{(d-1)(d-2)}{2}-\sum_k n_k\frac{k(k-1)}{2}=1.
\end{equation}
We remark that by virtue of \eqref{pencil intersect number},  the latter equation is equivalent to
\begin{equation}\label{from pencil genus}
3d=\sum_k n_kk,
\end{equation}
where the right-hand side is the total number of base points (counted with multiplicities).

Examples:
\begin{enumerate}
\item \emph{A pencil of the type $(3;9^1)$} of cubic curves with nine simple base points.
\item \emph{A pencil of the type $(4;8^12^2)$} of curves of degree 4 with eight simple and two double points. By an automorphism of $\bbP^2$, we can send the double points to infinity (say, to $[0:1:0]$ and $[0:0:1]$), then in the affine coordinates $(x_1/x_0, x_2/x_0)$, we get a pencil of biquadratic curves. Such pencils are pretty well studied and have plenty of applications in the theory of discrete integrable systems \cite{QRT, Dui}.
\item \emph{A pencil of the type $(6;6^13^22^3)$} of curves of degree 6 with six simple points, three double points and two triple points.
\item \emph{A pencil of the type $(6;9^2)$} of curves of degree 6 with nine double points, i.e., a Halphen pencil of index 2. 
\end{enumerate}

{\bf Remark.} We do allow \emph{infinitely near base points}, at which the curves of the pencil have to satisfy certain conditions tangency up to certain order. In the formulations of our general results about geometry of Manin involutions, we silently assume that the geometry of base points is generic, in particular, that there are no incidental collinearities. However, all our main examples are non-generic with a plenty of incidental collinearities, since it is exactly this feature that allows for a substantial drop of degree of the resulting birational maps. We hope this will not lead to any confusions.

%%%%%%%%%%%%%%%%%%%%%%%%%%%%%%%%%%%%%

\section{Manin involutions}
For cubic curves, one has a simple geometric interpretation of the addition law. Correspondingly, there is a simple geometric construction of certain birational involutions of $\bbP^2$ induced by pencils of cubic curves, cf. \cite[p. 1376]{Manin}, \cite[p. 35]{Ves}. These were dubbed \emph{Manin involutions} in \cite[Sect. 4.2]{Dui}.

\begin{definition} \label{def Manin cubic} \textbf{(Manin involutions for cubic pencils)}

{\em 1)}
Consider a nonsingular cubic curve $C$ in $\bbP^2$, and a point $p_0\in C$. The {\em Manin involution} on $C$ with respect to $p_0$ is the map $I_{C, p_0}: C\rightarrow C$ defined as follows: for a generic $p\neq p_0$, the image $I_{C, p_0}(p)$ is the unique third intersection point of $C$ with the line $(p_0p)$; for $p=p_0$, the line $(p_0p)$ should be interpreted as the tangent line to $C$ at $p_0$.

{\em 2)}   Consider a pencil $\cP=\{C_\lambda\}$ of cubic curves in $\bbP^2$ with at least one nonsingular member. Let $p_0$ be a base point of the pencil. The {\em Manin involution} $I_{\cP,p_0}:\bbP^2\dashrightarrow\bbP^2$ is a birational map defined as follows. For any $p\in\bbP^2$ which is not a base point, $I_{\cP,p_0}(p)=I_{C_\lambda,p_0}(p)$, where $C_\lambda$ is the unique curve of the pencil through the point $p$. 
\end{definition}

For elliptic pencils of degree higher than 3, the geometric construction of Manin involutions seems to be unknown. The only exception are the vertical and the horizontal switches in biquadratic pencils, of which the famous QRT maps are composed \cite{QRT, Dui}. They can be immediately translated to a construction of \emph{generalized Manin involutions} for quartic pencils with two double points, with respect to the both double points \cite{KMQ}. The definition of the generalized Manin involution $I_{C, p_0}$ for a quartic curve $C$ and a double point $p_0\in C$, resp. of the generalized Manin involution $I_{\cP,p_0}$ for a quartic pencil with two double points, one of them being $p_0$, literally coincides with Definition \ref{def Manin cubic}. This is justified by the fact that any line through a double point $p_0\in C$ still intersects the quartic curve $C$ at two further points.

The main goal of this paper is to elaborate on the geometric definition of Manin involutions in arbitrary elliptic pencils birationally equivalent to a Halphen pencil of index 1, i.e., to a cubic pencil. 

To find a birational equivalence, one can resolve the multiple base points  by means of suitable birational transformations. Often, the simplest way of doing this is by a sequence of \emph{quadratic Cremona transformations}. Recall that a generic quadratic Cremona transformation $\phi:\bbP_1^2\dashrightarrow\bbP_2^2$ has three distinct fundamental points $\mathcal I(\phi)=\{p_1,p_2,p_3\}$ which are blown up to three lines $(q_2q_3)$, $(q_3q_1)$, $(q_1q_2)$, respectively. The three lines $(p_2p_3)$, $(p_3p_1)$, $(p_1p_2)$ are blown down to the points $q_1$, $q_2$, $q_3$, respectively, which build the indeterminacy set of the inverse map: $\mathcal I(\phi^{-1})=\{q_1,q_2,q_3\}$. A practical way to construct such a map consists in finding homogeneous polynomials $\phi(x_0,x_1,x_2)$ of degree 2 vanishing at the fundamental points $p_1,p_2,p_3$. Geometrically, we are speaking about the set of conics in $\bbP_1^2$ through $p_1,p_2,p_3$. The space of solutions of this linear system is two-dimensional: $\alpha \phi_0+\beta \phi_1+\gamma \phi_2$, where $\phi_0,\phi_1,\phi_2$ are homogeneous polynomials of $(x_0,x_1,x_2)$ of degree 2. The map  
\begin{equation}\label{birational change coord}
\phi:[x_0:x_1:x_2]\mapsto [u_0:u_1:u_2]=[\phi_0(x_0,x_1,x_2):\phi_1(x_0,x_1,x_2):\phi_2(x_0,x_1,x_2)]
\end{equation}
is the sought after birational map $\bbP^2\dashrightarrow\bbP^2$. A different choice of a basis $\phi_0,\phi_1,\phi_2$ of the net corresponds to a linear projective transformation of the target plane $\bbP_2^2$. 

Note that the pre-image of a generic line $au_0+bu_1+cu_2=0$ in the target plane $\bbP_2^2$ is the conic $a\phi_0+b\phi_1+c\phi_2=0$ (passing through $p_1,p_2,p_3$) in the source plane $\bbP_1^2$. It follows that for any regular point $p$ of $\phi$, the pencil of lines $\cP(1;q)$ through $q=\phi(p)$ in $\bbP_2^2$ corresponds to the pencil of conics $\cP(2; p,p_1,p_2,p_3)$ in $\bbP_1^2$.

\section{Example: a quartic pencil with two double base points}
\label{Sect quartic}

\subsection{Geometry of the base points}

Consider an elliptic pencil in $\bbP^2$ of the type $(4;8^12^2)$,
$$
\cE=\cP(4;p_1,\ldots,p_8,p_9^2,p_{10}^2).
$$
Thus, $\cE$ consists of quartic curves with 8 simple base points $p_1,\ldots,p_8$ and two double base points $p_9,p_{10}$.
The position of the ten base points is not arbitrary: for a generic configuration of ten points, there exists just one curve of degree 4 through these points, having the prescribed two of them as double points. On the other hand, for a generic configuration of nine points, there is a one-parameter family (a pencil) of curves of degree 4 through these points, having the prescribed two of them as double points (nine incidence conditions plus four second order conditions, altogether 13 linear conditions, while a generic curve of degree 4 has 14 non-homogeneous coefficients). Counting the intersection numbers, we see that all curves of the pencil pass through a further simple point (indeed, seven simple points and two double points contribute $7\times 1+2\times 4=15$ to the intersection number $16$).

More information on the configuration of the ten base points is contained in the following statement.
\begin{proposition}
In a generic pencil $\cP(4;p_1,\ldots,p_8,p_9^2,p_{10}^2)$, one of the curves is reducible and consists of the line $(p_9p_{10})$ and a cubic curve passing through all ten base points $p_1,\ldots,p_{10}$.
\end{proposition}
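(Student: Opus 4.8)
The plan is to locate the special reducible member directly, by restricting the pencil to the line $\ell=(p_9p_{10})$ joining the two double base points. The key observation is a Taylor/degree bookkeeping: if a quartic $F$ has a double point at $p_9$, its expansion there begins in degree $2$, so the restriction $F|_\ell$ vanishes to order at least $2$ at $p_9$, and likewise at $p_{10}$. Thus both $F|_\ell$ and $G|_\ell$ are binary quartic forms on $\ell\cong\bbP^1$ vanishing to order $\ge 2$ at each of the two distinct points $p_9,p_{10}$.

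First I would argue that such a form is essentially unique. A nonzero binary quartic has exactly four roots counted with multiplicity; forcing order $\ge 2$ at two distinct points already accounts for all four, so $F|_\ell$ is a scalar multiple of $s^2t^2$, where $s,t$ are the linear forms on $\ell$ cutting out $p_9,p_{10}$ (and similarly for $G|_\ell$, unless it vanishes identically). Hence $F|_\ell=c_1\,s^2t^2$ and $G|_\ell=c_2\,s^2t^2$ are proportional. Under the standing genericity assumption that $\ell$ is not contained in the base locus, the pair $(c_1,c_2)$ is nonzero, so there is a unique $\lambda_0\in\bbP^1$ with $(F+\lambda_0 G)|_\ell\equiv 0$. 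Equivalently, the linear form defining $\ell$ divides $F+\lambda_0 G$, so the member splits as $C_{\lambda_0}=\ell\cup Q$ with $Q$ a cubic.

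It then remains to check that $Q$ passes through all ten base points. The eight simple base points lie on $C_{\lambda_0}$ and, generically, off $\ell$, so they lie on $Q$. For the double points I would use additivity of multiplicity on components of a reducible curve: since $p_9$ is a double base point, $\operatorname{mult}_{p_9}C_{\lambda_0}\ge 2$, while $\operatorname{mult}_{p_9}\ell=1$, forcing $\operatorname{mult}_{p_9}Q\ge 1$, i.e.\ $p_9\in Q$; the same argument gives $p_{10}\in Q$. Thus $Q$ is a cubic through $p_1,\dots,p_{10}$, as asserted.

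The degree count and the restriction computation are routine; the one point demanding care is the role of the genericity hypothesis. It is needed precisely to rule out the degenerate possibility that both $F|_\ell$ and $G|_\ell$ vanish identically, which would mean $\ell$ divides both $F$ and $G$ and hence $\ell\subset B$, contradicting that the base locus is the prescribed finite set of ten points. Granting this, the special member is unique, and the multiplicity argument at $p_9,p_{10}$—which relies only on the fact that a point of multiplicity $m$ on a plane curve forces every line through it to meet the curve there with intersection multiplicity at least $m$—is the only slightly delicate geometric input.
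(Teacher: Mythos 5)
Your proof is correct and rests on the same key observation as the paper's: the two double points $p_9,p_{10}$ already absorb all four intersections of the line $\ell=(p_9p_{10})$ with any quartic of the pencil, so one member must contain $\ell$ as a component. The paper packages this as a B\'ezout contradiction using an auxiliary point $p\in\ell\setminus\{p_9,p_{10}\}$, while your restriction-to-$\ell$ argument is the same computation narrated through binary forms; as a modest bonus it makes the uniqueness of the reducible member explicit and spells out, via additivity of multiplicities over components, why the residual cubic passes through $p_9$ and $p_{10}$ --- a detail the paper merely asserts.
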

\begin{proof}
Fix any point $p\in(p_9p_{10})$  different from $p_9$, $p_{10}$, and consider the unique curve $C$ of the pencil through $p$. If  the line $(p_9p_{10})$ would not be a component of this curve, then the intersection number of $C$  with the line $(p_9p_{10})$ would be at least $2\times 2+1=5$, a contradiction. Thus, the curve $C$ is reducible and contains the line $(p_9p_{10})$ as one of the components. Another component is a cubic curve through $p_1,\ldots,p_{10}$ (with $p_9$, $p_{10}$ being simple points on the cubic).
\end{proof}

\begin{rmk}
If the reducible curve $C$ happens to contain $(p_9p_{10})$ as a double line, then the remaining component is a conic through eight base points $p_1,\ldots,p_8$. 
\end{rmk}

\subsection{Birational reduction to a cubic pencil}

Consider a pencil $\cE=\cP(4;p_1,\ldots,p_8,p_9^2,p_{10}^2)$. Let  $\phi:\bbP_1^2\dashrightarrow\bbP_2^2$ be a quadratic Cremona map with the fundamental points $p_1,p_9,p_{10}$.  Thus, $\phi$ blows down the lines  $(p_9p_{10})$, $(p_1p_{10})$, $(p_1p_9)$ to points denoted by $q_1,q_9,q_{10}$, respectively, and blows up the points $p_1,p_{9},p_{10}$ to the lines $(q_{9}q_{10})$, $(q_1q_{10})$, $(q_1q_{9})$. All other base points $p_i$, $i=2,\ldots,8$, are regular points of $\phi$, their images will be denoted by $q_i=\phi(p_i)$. 

\begin{proposition} \label{th quartic to cubic} 
Under the map $\phi$:
\begin{itemize}
    \item[a)] Quartic curves of the original pencil $\cE$ in $\bbP_1^2$ correspond to curves of a cubic pencil 
    $$
    \cP(3;q_2, \ldots, q_8, q_9, q_{10})
    $$
    with nine base points in $\bbP_2^2$; the point $q_1$ is not a base point of the latter pencil. 
    \item[b)] For $i=2,\ldots,8$, the pencil of lines $\cP(1;q_i)$ in $\bbP_2^2$ corresponds to the pencil of conics 
    $$
    \cP(2; p_i,p_1, p_9, p_{10})
    $$
    in $\bbP_1^2$.    
    \item[c)] The pencils of lines $\cP(1;q_9)$, $\cP(1; q_{10})$ in $\bbP_2^2$ correspond to the pencils of lines 
    $$
    \cP(1;p_9), \quad \cP(1; p_{10})
    $$
    in $\bbP_1^2$. 
    \end{itemize}
\end{proposition}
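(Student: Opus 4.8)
The plan is to reduce all three parts to the standard rules governing how degrees, multiplicities, and preimages of linear systems behave under a quadratic Cremona map. For part a), the tool I would use is the transformation rule for a curve $C$ of degree $d$ meeting the three fundamental points $p_1,p_9,p_{10}$ of $\phi$ with multiplicities $m_1,m_9,m_{10}$: the image $\phi(C)$ has degree $2d-m_1-m_9-m_{10}$ and acquires multiplicities $d-m_9-m_{10}$, $d-m_1-m_{10}$, $d-m_1-m_9$ at $q_1,q_9,q_{10}$ (the multiplicity at $q_j$ being $d$ minus the sum of the multiplicities at the two fundamental points lying on the contracted line that is blown down to $q_j$), while retaining its multiplicities at the images $q_i=\phi(p_i)$ of regular points. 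I would either invoke these as classical or derive the needed instances directly by Bezout.

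Applying this to a quartic $C\in\cE$, so $d=4$, $m_1=1$, $m_9=m_{10}=2$, yields image degree $8-1-2-2=3$ and multiplicities $4-2-2=0$, $4-1-2=1$, $4-1-2=1$ at $q_1,q_9,q_{10}$. Hence $q_9,q_{10}$ become simple base points and $q_1$ is \emph{not} a base point; the cleanest way to see the latter is that by Bezout the line $(p_9p_{10})$ meets the quartic $C$ in four points, all absorbed by the two double points $p_9,p_{10}$, so no residual intersection survives to force $\phi(C)$ through $q_1$. The regular base points $p_2,\ldots,p_8$ go to simple base points $q_2,\ldots,q_8$. Since the defining form $F+\lambda G$ transforms linearly in $\lambda$, the image curves form a pencil, and counting $7+2=9$ simple base points against $3^2=9$ confirms that $\cP(3;q_2,\ldots,q_8,q_9,q_{10})$ is precisely the resulting cubic pencil, proving a).

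Part b) I would dispatch as a direct instance of the fact noted right after the construction of the quadratic Cremona map, namely that for any regular point $p$ of $\phi$ the pencil of lines through $\phi(p)$ pulls back to the pencil of conics through $p$ and the three fundamental points: taking $p=p_i$ for $i=2,\ldots,8$ gives that $\cP(1;q_i)$ pulls back to $\cP(2;p_i,p_1,p_9,p_{10})$.

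The genuinely delicate point, part c), concerns $q_9,q_{10}$, which are the fundamental points of the inverse $\phi^{-1}$, and this is where I expect the main obstacle to lie. For $q_9$ (with $q_{10}$ entirely symmetric) I would argue that the preimage under $\phi$ of a generic line through $q_9$ is again a conic through $p_1,p_9,p_{10}$, but because $q_9$ is exactly the point to which $\phi$ contracts the line $(p_1p_{10})$, equivalently the point that $\phi^{-1}$ blows up to $(p_1p_{10})$, this conic must contain $(p_1p_{10})$ as a fixed component. That component already carries the base points $p_1$ and $p_{10}$, so the residual variable line is forced through the one remaining fundamental point $p_9$; as the line through $q_9$ varies, these residual lines sweep out exactly $\cP(1;p_9)$. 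The difficulty here is conceptual rather than computational: one must correctly identify the fixed line split off by the contraction and verify that the residual family is a genuine pencil of lines and not a larger system, which holds because $\phi$ restricts to an isomorphism away from the fundamental lines.
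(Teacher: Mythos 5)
Your proposal is correct and follows essentially the same route as the paper's proof: part a) by the degree/multiplicity bookkeeping of the quadratic Cremona map (the paper derives the same numbers by writing the total image as a curve of degree $2\cdot 4=8$, splitting off the contracted lines with multiplicities $1,2,2$, and using the identical Bezout counts on $(p_9p_{10})$, $(p_1p_9)$, $(p_1p_{10})$ to settle the multiplicities at $q_1$, $q_9$, $q_{10}$); part b) from regularity of $p_2,\ldots,p_8$; and part c) by noting that the total pre-image of a line through $q_9$ is a conic through $p_1,p_9,p_{10}$ that necessarily contains the contracted line $(p_1p_{10})$, so its residual component is a line through $p_9$. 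No gaps; the extra verifications you add (the pencil/base-point count in a), and the check that the residual lines in c) sweep out a genuine pencil) are harmless refinements of the same argument.
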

\begin{proof} 
\quad
\begin{itemize}
   \item[a)]  The total image of a quartic curve $C\in\cE$ is a curve of degree 8. Since $C$ passes through $p_1$, its total image contains the line $(q_9q_{10})$. Since $C$ passes through $p_9$ and $p_{10}$ with multiplicity 2, its total image contains the lines $(q_1q_{10})$ and $(q_1q_9)$ with multiplicity 2. Dividing by the linear defining polynomials of all these lines, we see that the proper image of $C$ is a curve of degree $8-5=3$. This curve has to pass through all points $q_i$, $i=2,\ldots,8$. The curve $C$ of degree 4 has no other intersections with the line $(p_9p_{10})$ different from two double points $p_9$ and $p_{10}$, therefore its proper image does not pass through $q_1$. On the other hand, the curve $C$ of degree 4 has one additional intersection point with each of the lines $(p_1p_9)$ and $(p_1p_{10})$, different from the simple point $p_1$ and the double point $p_9$, resp. $p_{10}$. Therefore, its proper image passes through $q_{10}$, resp. $q_9$, with multiplicity 1. 
    
    \item[b)] This follows from the fact that $p_i$, $i=2,\ldots,8$, are regular points of $\phi$.
     
    \item[c)] Consider the total pre-image of a line through $q_9$. It is a conic through $p_1, p_9, p_{10}$ whose defining polynomial vanishes on the line $(p_1p_{10})$.  Thus, the conic is reducible and contains that line. Dividing by the defining polynomial of this line (of degree 1), we see that the proper pre-image is a line which must pass through $p_9$. Similarly, the proper pre-image of a line through $q_{10}$ is a line through $p_{10}$. 
           
%      \smallskip
%      \item[] Conversely, the proper pre-image of a cubic curve through the the nine base points  $q_2, \ldots, q_8, q_9, q_{10}$ in $\bbP_2^2$ is a curve of degree 4 in $\bbP_1^2$ passing through the points $p_i$ with multiplicity 1 for $i=1,\ldots,8$, and with multiplicity 2 for $i=9,10$. Indeed, the total pre-image is a curve of degree 6. For the same reason as in the previous argument, since the cubic curve passes through $q_9$ and $q_{10}$, its total pre-image contains the lines $(p_1p_{10})$ and $(p_1p_9)$. Dividing by the linear defining polynomials of these two lines, we see that the proper pre-image is a curve of degree 4. This curve has to pass through all base points $p_i$, $i=2,\ldots,8$. Since the cubic curve intersects the line $(q_9q_{10})$ at one further point different from $q_9$ and $q_{10}$, its proper pre-image passes through $p_1$. Since the cubic curve intersects the line $(q_1q_{10})$ at two further points different from $q_{10}$, its proper pre-image passes through $p_9$ with multiplicity 2. Similarly, it passes through $p_{10}$ with multiplicity 2.  
\end{itemize}
\end{proof}

Let $S$ be the elliptic surface obtained from $\bbP^2$ by blowing up the ten base points $p_i$, $i=1,\ldots,10$. Let $E_i$ be the exceptional divisor classes of the blow-ups. The Picard group of $S$ is ${\rm Pic}(S)=\mathbb{Z}D\bigoplus\mathbb{Z}E_1\bigoplus\ldots\bigoplus\mathbb{Z}E_{10}$. The class of a generic curve of the pencil $\cE$ is 
\begin{equation} \label{biquadr divisor}
4D-E_1-E_2-E_3-E_4-E_5-E_6-E_7-E_8-2E_9-2E_{10}.
\end{equation}
The quadratic Cremona map of Proposition \ref{th quartic to cubic} corresponds to the following change of basis of the Picard group:
\begin{equation} \label{biquadr change Picard}
\renewcommand{\arraystretch}{1.3}
\left\{ \begin{array}{ccl}
D' & = & 2D-E_1-E_9-E_{10},\\ 
E'_1 & = & D-E_9-E_{10}, \\ 
E'_9 & = & D-E_1-E_{10},\\ 
E'_{10} & = & D-E_1-E_9,
\end{array} \right.
\end{equation}
and $E'_i=E_i$ for $i=2,\ldots,8$.

One can check that $E'_1$ is a redundant class, in the sense that the class \eqref{biquadr divisor} of a general curve of the pencil is expressed through $E'_2,\ldots,E'_{10}$ only:
\begin{equation} \label{K}
4D-E_1-\ldots-E_8-2E_9-2E_{10}=3D'-E'_2-E'_3-\ldots-E'_{10}.
\end{equation} 
This corresponds to the fact that $q_1$ is not a base point of the $\phi$-image of the pencil $\cE$. Note that $E'_1=D-E_9-E_{10}$ is the class of (the proper transform of) the line $(p_9p_{10})$ in $\bbP^2$.  Blowing down $E'_1$ on $S$, we obtain the surface $S^\prime$ which is a minimal elliptic surface (blow-up of $\mathbb P^2$ at nine points), whose anti-canonical divisor class coincides with \eqref{K}. 

Statement b) of Proposition \ref{th quartic to cubic} translates to relations $D'-E'_i=2D-E_1-E_9-E_{10}-E_i$ in the Picard group (for $i=2,\ldots,8$), while statement c) translates as  $D'-E'_9=D-E_9$ and $D'-E'_{10}=D-E_{10}$.

\subsection{Manin involutions}

In the new coordinates, where the pencil consists of cubic curves, Manin involutions $I_{q_i}$ with respect to the base points $q_i$ of the pencil are defined as in Definition \ref{def Manin cubic}: for a point $q$ which is not a base point, $I_{q_i}(q)$ is the unique third intersection of the the line $(q_iq)$ with the cubic curve of the pencil passing through $q$. We now pull back this construction to the original pencil in the old coordinates.

\begin{definition} \textbf{(Manin involutions for pencils of the type $(4;8^1 2^2)$)}
\smallskip

Consider a pencil $\cE=\cP(4;p_1,\ldots,p_8,p_9^2,p_{10}^2)$. There are two kinds of Manin involutions.
\smallskip

 {\em 1)} Involutions $I_{i,j}^{(2)}$, $i,j\in\{1,\ldots,8\}$, defined in terms of the pencil  of conics 
 $$
\cC_{i,j}= \cP(2; p_i,p_j,p_9,p_{10}).
 $$
Given a point $p$ which is not a base point of $\cE$, there is a unique conic of $\cC_{i,j}$ passing through $p$ and a unique quartic curve of $\cE$ passing through $p$. We set $I_{i,j}^{(2)}(p)=p'$, where $p'$ is the unique further intersection point of those two curves. This intersection is unique, since the intersection number of the conic with the quartic is $2\times 4=8$, while the intersections at the points $p_i,p_j,p_9,p_{10}$, and $p$ count as $1+1+2+2+1=7$. 
 \smallskip
  
  {\em 2)} Involutions $I_{9}^{(1)}$, $I_{10}^{(1)}$ defined in terms of the pencils of lines
$$
\cP(1; p_9),\;\;resp.\;\; \cP(1; p_{10}).
$$ 
For instance, the involution $I_9$ is defined as follows. Given a point $p$ which is not a base point of $\cE$, we set $I_{9}^{(1)}(p)=p'$, where $p'$ is the unique third intersection of the line $(p_9p)$ and the quartic curve of $\cE$ passing through $p$. This intersection is unique, since $p_9$ is a double point of the curve. 
 \end{definition}

Indeed:   
\begin{itemize}
 
 \item[1)] Due to point b) of Proposition \ref{th quartic to cubic}, for any $i=2,\ldots,8$, the Manin involution with respect to $q_i$ is conjugated to the map defined as above in terms of conics through $p_1$, $p_9$, $p_{10},$ and $p_i$. Remarkably, while in the construction of the conjugating Cremona map the roles of the simple base points $p_1$ and $p_i$ are asymmetric, in the resulting map  $I_{1,i}^{(2)}$  the points $p_1$ and $p_i$ are on an equal footing. More generally, $I_{i,j}^{(2)}=I_{j,i}^{(2)}$, where the map on the left-hand side should be understood as conjugated to $I_{q_j}$ under the quadratic Cremona map with the fundamental points $p_i,p_9,p_{10}$, while the right-hand side should be understood as conjugated to   $I_{q_i}$ under the quadratic Cremona map with the fundamental points $p_j,p_9,p_{10}$.

 \item[2)] Due to point c) of Proposition \ref{th quartic to cubic}, Manin involutions $I_{q_9}, I_{q_{10}}$ on $\bbP_2^2$ are conjugated to the maps $I_{9}^{(1)}, I_{10}^{(1)}$ on $\bbP_1^2$ defined in terms of lines through $p_9, p_{10}$, respectively. Again, while the construction depends on the choice of a simple base point $p_1$, the resulting map does not depend on this choice.
\end{itemize}

The involution $I_{i,j}^{(2)}$ has all base points of the pencil as singularities (indeterminacy points). For instance, it blows up the point  $p_k$ to the conic through $p_i,p_j,p_k,p_9,p_{10}$. However, a composition 
$$
I_{j,k}^{(2)} \circ I_{i,j}^{(2)}
$$ 
with three distinct simple base points $p_i,p_j,p_k$  is well defined at $p_k$ and maps it to $p_i$. Moreover, this composition can be characterized as the unique map acting on the elliptic curves of the pencil as the shift mapping $p_k$ to $p_i$. In particular, this composition does not depend on $j$.

%%%%%%%%%%%%%%%%%%%%%%%%%%%%%%%%%%%%%%%%%%%%

\section{Example: a sextic pencil with three double base points and two triple base points}
\label{Sect sextic}

\subsection{Birational reduction to a cubic pencil} Consider an elliptic pencil in $\bbP^2$ of the type $(6;6^1 3^2 2^3)$,
$$
\cE=\cP(6;p_1,\ldots,p_6,p_7^2,p_8^2,p_9^2,p_{10}^3,p_{11}^3), 
$$
consisting of curves of degree 6 with six simple base points $p_1,\ldots,p_6$, three double base points $p_7,p_8,p_9$, and two triple base points $p_{10},p_{11}$. 
We reduce it to a cubic pencil in two steps.
\medskip 

\emph{Step 1.} Apply a quadratic Cremona map $\phi'$ with the fundamental points $p_9,p_{10},p_{11}$ (the both triple base points and one of the double base points).  Thus, $\phi'$ blows down the lines  $(p_{10}p_{11})$, $(p_9p_{11})$, $(p_9p_{10})$ to the points denoted by $q_9,q_{10},q_{11}$, respectively, and blows up the points $p_9,p_{10},p_{11}$ to the lines $(q_{10}q_{11})$, $(q_9q_{11})$, $(q_9q_{10})$. All other  base points $p_i$, $i=1,\ldots,8$ are regular points of $\phi'$ and their images are denoted by $q_i=\phi'(p_i)$.

\begin{proposition} \label{th sextic to quartic}
The change of variables $\phi'$ maps a pencil $\cE=\cP(6;p_1,\ldots,p_6,p_7^2,p_8^2,p_9^2,p_{10}^3, p_{11}^3)$ of sextic curves to a pencil $\cP(4;q_1,\ldots,q_6,q_{10},q_{11},q_7^2,q_8^2)$ of quartic curves with eight simple base points and two double base points. The point $q_9$ is not a base point of the latter pencil.
\end{proposition}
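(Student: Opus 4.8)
The plan is to mirror exactly the bookkeeping argument used in the proof of Proposition \ref{th quartic to cubic}, tracking degrees and multiplicities under the quadratic Cremona map $\phi'$ with fundamental points $p_9, p_{10}, p_{11}$. First I would compute the degree of the proper transform of a generic sextic curve $C \in \cE$. Since $\phi'$ is a quadratic map, the total image of a curve of degree $6$ is a curve of degree $2 \times 6 = 12$. I then subtract the contributions of the three lines blown down from the fundamental points. The key input is the multiplicity of $C$ at each fundamental point: $C$ has multiplicity $2$ at $p_9$, and multiplicity $3$ at each of $p_{10}, p_{11}$. Each fundamental point $p$ of multiplicity $m$ forces the total image to contain the corresponding exceptional line with multiplicity $m$. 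Hence the total image contains the line $(q_{10}q_{11})$ (the blow-up of $p_9$) with multiplicity $2$, and the lines $(q_9 q_{11})$, $(q_9 q_{10})$ (blow-ups of $p_{10}, p_{11}$) each with multiplicity $3$. Dividing out these linear factors yields a proper transform of degree $12 - (2 + 3 + 3) = 4$, which is the claimed quartic.

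Next I would determine the base points of the image pencil and their multiplicities. The simple base points $p_1,\ldots,p_6$ and the double base points $p_7, p_8$ are regular points of $\phi'$, so their images $q_1,\ldots,q_6, q_7, q_8$ lie on the proper transform, with $q_1,\ldots,q_6$ simple and $q_7, q_8$ double (multiplicities are preserved at regular points). For the images $q_9, q_{10}, q_{11}$ of the blown-down lines, I would count intersection numbers of $C$ with the three lines joining the fundamental points, exactly as in part (a) of Proposition \ref{th quartic to cubic}. The line $(p_{10}p_{11})$ meets $C$ with intersection number $3 + 3 = 6$ at the two triple points alone, using up the full intersection number $6$ (the line meets the sextic in $6$ points), so $C$ has no further intersection with $(p_{10}p_{11})$; therefore its proper transform does not pass through $q_9$, confirming that $q_9$ is not a base point. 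By contrast, the line $(p_9 p_{10})$ meets $C$ at $p_9$ (multiplicity $2$) and $p_{10}$ (multiplicity $3$), accounting for $5$ of the $6$ intersections, leaving one additional intersection point, so the proper transform passes through $q_{11}$ with multiplicity $1$; symmetrically, the proper transform passes through $q_{10}$ with multiplicity $1$ via the line $(p_9 p_{11})$. Thus $q_{10}, q_{11}$ are simple base points of the quartic pencil, giving the base locus $q_1,\ldots,q_6, q_{10}, q_{11}$ simple and $q_7, q_8$ double, as claimed.

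The main obstacle, such as it is, lies in the asymmetry of the roles played by the three fundamental points: $p_9$ is a double point while $p_{10}, p_{11}$ are triple points, so the three exceptional lines enter the total transform with different multiplicities, and the three connecting lines $(p_9 p_{10})$, $(p_9 p_{11})$, $(p_{10} p_{11})$ each carry a different intersection pattern with $C$. I must keep these three cases carefully separated rather than invoking symmetry, and verify in each case that the intersection number with the relevant line is fully or partially consumed by the prescribed multiplicities. A secondary point worth a sanity check is consistency with the intersection-number identity \eqref{pencil intersect number}: the target quartic pencil has type $(4; 8^1 2^2)$, so $\sum_k n_k k^2 = 8 \cdot 1 + 2 \cdot 4 = 16 = 4^2$, matching the degree, and likewise \eqref{from pencil genus} gives $\sum_k n_k k = 8 + 4 = 12 = 3 \cdot 4$, confirming the reduced pencil is again elliptic of the expected type. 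This verification, while not strictly required, confirms no base point has been miscounted.
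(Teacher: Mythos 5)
Your proposal is correct and follows essentially the same argument as the paper: compute the total transform of degree $12$, subtract the exceptional lines with multiplicities $2,3,3$ to get a quartic proper transform, note that $q_1,\ldots,q_8$ inherit their multiplicities since $p_1,\ldots,p_8$ are regular points, and determine membership of $q_9,q_{10},q_{11}$ by counting residual intersections of $C$ with the three lines joining the fundamental points. The concluding consistency check against the elliptic-pencil identities is a nice addition but not part of the paper's proof.
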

\begin{proof} 
 The total image of a curve $C\in\cE$ is a curve of degree 12. Since $C$ passes through $p_9,p_{10},p_{11}$ with the multiplicities 2,3,3, its total image contains the lines $(q_{10}q_{11})$, $(q_9q_{11})$, $(q_9q_{10})$ with the same multiplicities. Dividing by the linear defining polynomials of all these lines, we see that the proper image of $C$ is a curve of degree $12-8=4$. This curve passes through all points $q_i$, $i=1,\ldots,8$ (for $i=7,8$ with multiplicity 2). The curve $C$ of degree 6 has no other intersections with the line $(p_{10}p_{11})$ different from two triple points $p_{10}$ and $p_{11}$, therefore its proper image does not pass through $q_9$. On the other hand, the curve $C$ of degree 6 has one additional intersection point with each of the lines $(p_9p_{10})$ and $(p_9p_{11})$, different from the double point $p_9$ and the triple point $p_{10}$, respectively $p_{11}$. Therefore, its proper image passes through $q_{11}$, resp. $q_{10}$, with multiplicity 1. 
\end{proof}
\medskip

\emph{Step 2.} Apply a quadratic Cremona map $\phi''$ with the fundamental points $q_7,q_8$ (the both double base points), and one of the simple base points. As we know from Proposition \ref{th quartic to cubic}, the image of the pencil $\cP(4;q_1,\ldots,q_6,q_{10},q_{11},q_7^2,q_8^2)$ under $\phi''$ is a pencil of cubic curves with nine base points. The nature of the composition $\phi''\circ \phi'$ depends on the choice of the simple base point $q_i$ designated as the third fundamental point of $\phi''$, and is different in the cases $i=1,\ldots,6$ and $i=10,11$. It turns out that the first option contains all the possibilities for the different sorts of Manin involutions, therefore we restrict our attention to this case, taking, for definiteness, $i=6$. Thus, let $\phi''$ have three fundamental points $q_6,q_7,q_8$. It blows down the lines  $(q_6q_7)$, $(q_6q_8)$, $(q_7q_8)$ to points $r_8,r_7,r_6$, respectively, and blows up the points $q_6,q_7,q_8$ to the lines $(r_7r_8)$, $(r_6r_8)$, $(r_6r_7)$. All other base points $q_i$, $i=1,\dotsc,5,10,11$ are regular points of $\phi’’$, their images will be denoted by $r_i=\phi’’(q_i)$. As follows from Propositions \ref{th sextic to quartic}, \ref{th quartic to cubic}, we have:
\begin{proposition} \label{th sextic vs cubic b}
The change of coordinates $\phi=\phi''\circ \phi':\bbP_1^2\dashrightarrow\bbP_2^2$ maps a pencil $$\cE=\cP(6;p_1,\ldots,p_6,p_7^2,p_8^2,p_9^2,p_{10}^3, p_{11}^3)$$ of sextic curves in $\bbP_1^2$ to a pencil $$\cP(3; r_1, \ldots, r_5,r_7,r_8, r_{10},r_{11})$$ of cubic curves with nine base points in $\bbP_2^2$. The points $r_6$ and $r_9$ are not base points of this cubic pencil. 
\end{proposition}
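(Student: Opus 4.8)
The plan is to prove the statement as a direct corollary obtained by composing the two reduction steps, feeding the output of Proposition \ref{th sextic to quartic} into Proposition \ref{th quartic to cubic}. First I would invoke Proposition \ref{th sextic to quartic}, which already establishes that $\phi'$ sends the sextic pencil $\cE$ to the quartic pencil $\cP(4;q_1,\ldots,q_6,q_{10},q_{11},q_7^2,q_8^2)$, with $q_9$ failing to be a base point. The crucial observation is that this intermediate pencil is precisely of the type $(4;8^12^2)$ treated in Proposition \ref{th quartic to cubic}: its eight simple base points are $q_1,\ldots,q_6,q_{10},q_{11}$ and its two double base points are $q_7,q_8$.

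Next I would match the data of $\phi''$ against the hypotheses of Proposition \ref{th quartic to cubic}. That proposition requires a quadratic Cremona map whose fundamental points are one simple base point together with the two double base points; here $\phi''$ has fundamental points $q_6,q_7,q_8$, so $q_6$ plays the role of the distinguished simple point (called $p_1$ there) while $q_7,q_8$ play the roles of the double points $p_9,p_{10}$. Proposition \ref{th quartic to cubic}a) then yields directly that $\phi''$ carries the quartic pencil to a cubic pencil with nine base points, namely the images $r_1,\ldots,r_5,r_{10},r_{11}$ of the seven remaining simple points together with the two base points $r_7,r_8$ arising from the double points (the blow-down points of $(q_6q_8)$ and $(q_6q_7)$, as in statement c)), while the image $r_6$ of the designated simple fundamental point $q_6$ is \emph{not} a base point.

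It then remains only to track the non-base point $q_9$ through the second step. Since $q_9$ lies on none of the quartic curves and, under the standing genericity assumption, is distinct from the three fundamental points $q_6,q_7,q_8$ of $\phi''$, it is a regular point of $\phi''$; hence its image $r_9=\phi''(q_9)$ lies on none of the cubic curves, so $r_9$ is not a base point either. Assembling these observations for $\phi=\phi''\circ\phi'$ produces exactly the cubic pencil $\cP(3;r_1,\ldots,r_5,r_7,r_8,r_{10},r_{11})$ with $r_6$ and $r_9$ excluded, as claimed.

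I do not expect any serious obstacle: the entire argument is a chaining of the two established propositions. The only point demanding genuine care is the relabeling dictionary between the notation of Proposition \ref{th quartic to cubic} and the present setup, and the verification that $q_9$ does not accidentally coincide with a fundamental point of $\phi''$ (guaranteed by genericity), which is precisely what lets the non-base-point property of $q_9$ survive the second blow-up/blow-down.
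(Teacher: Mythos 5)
Your proposal is correct and takes essentially the same route as the paper: Proposition \ref{th sextic vs cubic b} is stated there precisely as a corollary of Propositions \ref{th sextic to quartic} and \ref{th quartic to cubic}, with exactly the relabeling dictionary (the fundamental points $q_6,q_7,q_8$ of $\phi''$ playing the roles of $p_1,p_9,p_{10}$) that you spell out. One small inaccuracy worth fixing: $q_9$ does lie on exactly one (special) member of the quartic pencil, so instead of ``lies on none of the quartic curves'' you should argue that the \emph{generic} member misses $q_9$ and hence, since $\phi''$ is a local isomorphism near $q_9$ (which requires $q_9$ to avoid the three blown-down lines, not merely the three fundamental points --- guaranteed by genericity), the generic cubic misses $r_9$; alternatively, $r_9$ and $r_6$ are excluded at once because the nine listed points already exhaust the base points a pencil of cubics can have.
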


Properties of the  birational change of coordinates $\phi=\phi''\circ \phi'$ on $\bbP^2$ are easily obtained.
It is a Cremona map of degree 4 which blows down the lines $(p_9p_{10})$, $(p_9p_{11})$, $(p_{10}p_{11})$ to the points $r_{11}$, $r_{10}$, $r_9$, respectively, and blows down the conics $C(p_6,p_7,p_9,p_{10},p_{11})$,  $C(p_6,p_8,p_9,p_{10},p_{11})$, $C(p_7,p_8,p_9,p_{10},p_{11})$ to the points $r_8$, $r_7$, $r_6$, respectively. Moreover, $\phi$ blows up the points $p_9$, $p_{10}$, $p_{11}$ to the lines $(r_{10}r_{11})$, $(r_9r_{11})$, $(r_{9}r_{10})$, respectively, and the points $p_6$, $p_7$, $p_8$ to the conics $C(r_7,r_8,r_9,r_{10},r_{11})$,  $C(r_6,r_8,r_9,r_{10},r_{11})$, $C(r_6,r_7,r_9,r_{10},r_{11})$, respectively. Points $p_i$, $i=1,\ldots,5$, are regular points of $\phi$, their images are $r_i=\phi(p_i)$. The pre-image of a generic line in $\bbP^2$ is a quartic curve passing through $p_6,\ldots,p_{11}$ (the points $p_{10}$ and $p_{11}$ being of multiplicity 2). In particular, for any regular point $p$, the pencil of lines $\cP(1;r)$ through $r=\phi(p)$ in $\bbP_2^2$ corresponds to the pencil 
    $$
    \cP(4; p,p_6,p_7,p_8,p_9^2,p_{10}^2,p_{11}^2)
    $$  
    of quartic curves in $\bbP_1^2$ .  
    
\begin{proposition} \label{th sextic vs cubic}
The change of coordinates $\phi=\phi''\circ \phi':\bbP_1^2\dashrightarrow\bbP_2^2$  has the following properties:
\begin{itemize}
    \item[a)] For $i=1,\ldots,5$, the pencil of lines $\cP(1;r_i)$ in $\bbP_2^2$ corresponds to the pencil 
    $$
    \cP(4; p_i,p_6,p_7,p_8,p_9^2,p_{10}^2,p_{11}^2)
    $$  
    of quartic curves in $\bbP_1^2$.
    
    \item[b)] For $i=10,11$, the proper pre-images of lines of the pencil $\cP(1;r_i)$ in $\bbP_2^2$ are cubics of the respective pencil
    $$
    \cP(3; p_6,p_7,p_8,p_9,p_{10}^2,p_{11}), \quad  \cP(3; p_6,p_7,p_8,p_9,p_{10},p_{11}^2)
    $$
    in $\bbP_1^2$.  
      
    \item[c)] For $i=7,8$, the proper pre-images of lines of the pencil $\cP(1;r_i)$ in $\bbP_2^2$ are conics of the respective pencil 
    $$
    \cP(2;p_7,p_9,p_{10},p_{11}), \quad \cP(2;p_8,p_9,p_{10},p_{11})
    $$
    in $\bbP_1^2$. 
\end{itemize}
\end{proposition}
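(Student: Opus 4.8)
The plan is to read off all three statements from the explicit description of the quartic Cremona map $\phi=\phi''\circ\phi'$ recorded immediately before the proposition. The two ingredients I would use are, first, that the pre-images of lines under $\phi$ form the net of quartics $\cP(4;p_6,p_7,p_8,p_9^2,p_{10}^2,p_{11}^2)$ (so that the total pre-image of any line is a quartic with $p_9,p_{10},p_{11}$ double and $p_6,p_7,p_8$ simple), and second, the list of curves that $\phi$ contracts to each of the points $r_6,\ldots,r_{11}$. Everything then reduces to extracting the contracted component from the total pre-image and bookkeeping the residual multiplicities.

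Part a) I expect to be immediate. For $i=1,\ldots,5$ the point $p_i$ is a regular point of $\phi$ and $r_i=\phi(p_i)$, so the general statement preceding the proposition — that $\cP(1;\phi(p))$ pulls back to $\cP(4;p,p_6,p_7,p_8,p_9^2,p_{10}^2,p_{11}^2)$ for every regular $p$ — applies verbatim with $p=p_i$, giving the asserted pencil of quartics.

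For parts b) and c) I would argue by component extraction. For $i\in\{7,8,10,11\}$ the map $\phi$ contracts a definite curve $Z_i$ to $r_i$: the line $(p_9p_{11})$ for $r_{10}$, the line $(p_9p_{10})$ for $r_{11}$, the conic $C(p_6,p_8,p_9,p_{10},p_{11})$ for $r_7$, and the conic $C(p_6,p_7,p_9,p_{10},p_{11})$ for $r_8$. Hence the total pre-image of any line through $r_i$ is a quartic of the net that necessarily contains $Z_i$ as a component; dividing by the defining polynomial of $Z_i$ leaves the proper pre-image, a curve of degree $4-\deg Z_i$ (a cubic when $Z_i$ is a line, a conic when $Z_i$ is a conic). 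One then recovers the residual base points by subtracting, at each $p_k$, the multiplicity of $Z_i$ from that of the net. For $r_{10}$, for instance, the net is double at $p_9,p_{10},p_{11}$ and simple at $p_6,p_7,p_8$, while $(p_9p_{11})$ is simple through $p_9,p_{11}$; the residual cubic is therefore double at $p_{10}$ and simple at $p_6,p_7,p_8,p_9,p_{11}$, i.e. lies in $\cP(3;p_6,p_7,p_8,p_9,p_{10}^2,p_{11})$, exactly as claimed, and the other three cases are identical bookkeeping.

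The one point that I expect to require genuine justification rather than mere arithmetic is the claim that the total pre-image contains $Z_i$ with multiplicity \emph{exactly} one, so that the division produces a curve of the stated degree with no spurious base points surviving. This holds because $\phi$ contracts each $Z_i$ to a point with the exceptional divisor of the resolution entering to first order; equivalently, the dimension and multiplicity counts close up, since the products $Z_i\cdot C$ with $C$ ranging over the conics (resp. cubics) through the indicated points already exhaust the sub-pencil of the net cut out by the incidence condition at $r_i$, and a direct check shows these products carry precisely the net multiplicities $(p_6,p_7,p_8;p_9^2,p_{10}^2,p_{11}^2)$. If full rigor is desired, I would make this transparent by transferring the whole computation into the Picard group of the blow-up of $\bbP^2$ at the eleven base points and verifying the corresponding class identities, in the spirit of the Picard-group calculation already carried out for the quartic pencil.
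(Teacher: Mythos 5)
Your proposal is correct and is essentially the paper's own proof: part a) from the regularity of $p_1,\ldots,p_5$, and parts b), c) by observing that the total pre-image of a line through $r_i$ is a quartic of the net which must contain the curve contracted to $r_i$ (the line $(p_9p_{11})$ resp.\ $(p_9p_{10})$ for $r_{10},r_{11}$, the conic $C(p_6,p_8,p_9,p_{10},p_{11})$ resp.\ $C(p_6,p_7,p_9,p_{10},p_{11})$ for $r_7,r_8$), dividing out its defining polynomial, and tracking the residual multiplicities --- with the Picard-group verification you offer as a backstop being exactly what the paper records after its proof via the relations for $D''-E''_i$. In fact your bookkeeping at $p_9$ is more accurate than the paper's prose, which repeatedly describes the quartic net as double only at $p_{10},p_{11}$ (and consequently omits $p_9$ from the residual cubic in its proof of part b)), whereas the net is double at $p_9,p_{10},p_{11}$, as your count and the paper's own formula $D''=4D-E_6-E_7-E_8-2E_9-2E_{10}-2E_{11}$ confirm, and this is what yields the asserted simple passage of the cubics through $p_9$.
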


\begin{proof} \quad
\begin{itemize}
   \item[a)] This follows from the fact that $p_i$, $i=1,\ldots,5$ are regular points of $\phi$.
     \medskip
     
  \item[b)] Consider the total pre-image of a line through $r_{10}$. It is a quartic curve passing through $p_6,\ldots,p_{11}$, having $p_{10},p_{11}$ as double points.  Its defining polynomial vanishes on the line $(p_9p_{11})$, which blows down to $r_{10}$.  Thus, the quartic is reducible and contains that line. Dividing by the defining polynomial of the line, we see that the proper pre-image is a cubic passing through $p_6,p_7,p_8,p_{10},p_{11}$, with $p_{10}$ being a double point. 
  \medskip    
     
  \item[c)] Consider the total pre-image of a line through $r_7$. It is a quartic curve passing through $p_6,\ldots,p_{11}$, having $p_{10},p_{11}$ as double points.  Its defining polynomial vanishes on the conic $C(p_6,p_8,p_9,p_{10},p_{11})$, which blows down to $r_7$.  Thus, the quartic is reducible and contains that conic. Dividing by the defining polynomial of the conic, we see that the proper pre-image is a conic passing through $p_7,p_9,p_{10},p_{11}$. 
\end{itemize}
\end{proof}      
       
Let $S$ be the elliptic surface obtained from $\bbP^2$ by blowing up the eleven base points $p_i$, $i=1,\ldots,11$. Let $E_i$ be exceptional divisor classes of the blow ups. The Picard group of $S$ is ${\rm Pic}(S)=\mathbb{Z}D\bigoplus\mathbb{Z}E_1\bigoplus\ldots\bigoplus\mathbb{Z}E_{11}$. The class of a generic curve of the pencil is 
\begin{equation} \label{sextic divisor}
6D-E_1-E_2-E_3-E_4-E_5-E_6-2E_7-2E_8-2E_9-3E_{10}-3E_{11}.
\end{equation}
The quadratic Cremona map $\phi'$ corresponds to the following change of basis of Pic$(S)$:
\begin{equation} \label{sextic change Picard 1}
\renewcommand{\arraystretch}{1.3}
\left\{ \begin{array}{ccl}
D' & = & 2D-E_9-E_{10}-E_{11},\\ 
E'_9 & = & D-E_{10}-E_{11}, \\ 
E'_{10} & = & D-E_9-E_{11},\\ 
E'_{11} & = & D-E_9-E_{10},
\end{array} \right.
\end{equation}
and $E'_i=E_i$ for $i=1,\ldots,8$. The Cremona map $\phi''$ corresponds to the following change of basis of the Picard group:
\begin{equation} \label{sextic change Picard 2}
\renewcommand{\arraystretch}{1.3}
\left\{ \begin{array}{ccl}
D''& = & 2D'-E'_6-E'_7-E'_8,\\ 
E''_6 & = & D'-E'_7-E'_8, \\ 
E''_7 & = & D'-E'_6-E'_8,\\ 
E''_8 & = & D'-E'_6-E'_7,
\end{array} \right.
\end{equation}
and $E''_i=E'_i$ for $i=1,\ldots,5$ and $i=9,10,11$.
Composing \eqref{sextic change Picard 1}, \eqref{sextic change Picard 2}, we easily compute:
\begin{equation} \label{sextic change Picard}
\renewcommand{\arraystretch}{1.3}
\left\{ \begin{array}{ccl}
D''& = & 4D-E_6-E_7-E_8-2E_9-2E_{10}-2E_{11},\\ 
E''_6 & = & 2D-E_7-E_8-E_9-E_{10}-E_{11}, \\ 
E''_7 & = & 2D-E_6-E_8-E_9-E_{10}-E_{11},\\ 
E''_8 & = & 2D-E_6-E_7-E_9-E_{10}-E_{11},\\ 
E''_9 & = & D-E_{10}-E_{11}, \\ 
E''_{10} & = & D-E_9-E_{11},\\ 
E''_{11} & = & D-E_9-E_{10},
\end{array} \right.
\end{equation}
and $E''_i=E_i$ for $i=1,\ldots,5$. One can check that the classes 
$$
E''_{6}=2D-E_7-E_8-E_9-E_{10}-E_{11}, \quad E''_{9}=D-E_{10}-E_{11}
$$ 
are redundant, in the sense that the class \eqref{sextic divisor} of a general curve of the pencil $\cE$ is expressed through $E''_i$, $i\neq 6,9$:
\begin{align} \label{K sextic}
6D-E_1-\ldots-E_6-2E_7\ -  & \ 2E_8-2E_9- 3E_{10}-3E_{11}  \nonumber\\
 &  =3D''-E''_1-\ldots-E''_5-E''_7-E''_8-E''_{10}-E''_{11}.
\end{align} 
This reflects the fact that $r_6$, $r_9$ are not base points of the resulting cubic pencil.
The redundant classes are the classes (of the proper transforms) of the conic $C(p_7,p_8,p_9,p_{10},p_{11})$, resp. of the line $(p_{10}p_{11})$ in $\bbP_1^2$.  The surface $S'$ obtained by blowing down $E''_6$ and $E''_{9}$ on $S$, is a minimal elliptic surface, whose anti-canonical divisor class coincides with \eqref{K sextic}. Generic fibers of $S'$ are exactly the lifts of generic curves of the initial sextic pencil $\cE$.

Note that statements of Proposition \ref{th sextic vs cubic} translate as the following relations  in Pic$(S)$:
\begin{eqnarray*}
D''-E''_i & = & 4D-E_i-E_6-E_7-E_8-2E_9-2E_{10}-2E_{11}, \quad i=1,\ldots,5,\\
D''-E''_{10} & = & 3D-E_6-E_7-E_8-E_9-2E_{10}-E_{11}, \\
D''-E''_7 & = & 2D-E_7-E_9-E_{10}-E_{11}.
\end{eqnarray*}

\subsection{Manin involutions}

We pull back the standard construction of Manin involutions for the cubic pencil in $\bbP_2^2$ by means of the map $\phi$ to the original pencil in $\bbP_1^2$.

\begin{definition} \textbf{(Manin involutions for pencils of the type $(6;6^1 3^2 2^3)$)} \quad

Consider a pencil $\cE=\cP(6;p_1,\ldots,p_6,p_7^2,p_8^2,p_9^2,p_{10}^3,p_{11}^3)$. There are three kinds of Manin involutions.
\smallskip

{\em 1)} Involutions $I_{i,j,k}^{(4)}$, $i,j\in\{1,\ldots,6\}$, $k\in\{7,8,9\}$. E.g., $I_{i,j,9}^{(4)}$ is defined in terms of quartic curves of the pencil
 $$
\cQ_{i,j,9}=\cP(4; p_i,p_j,p_7,p_8,p_9^2,p_{10}^2,p_{11}^2).
 $$
Given a point $p$ which is not a base point of $\cE$, there is a unique quartic curve of $\cQ_{i,j,9}$ through $p$ and a unique sextic curve of $\cE$ through $p$. We set $I_{i,j,9}^{(4)}(p)=p'$, where $p'$ is the unique further intersection point of these two curves. This intersection is unique, since the intersection number of the quartic with the sextic is $4\times 6=24$, while the intersections at the points $p_i,p_j,p_7,p_8,p_9,p_{10},p_{11}$, and $p$ count as $1+1+2+2+4+6+6+1=23$. Involutions $I_{i,j,k}^{(4)}$ with $k=7,8$ are defined similarly.
\smallskip

 {\em 2)} Involutions $I_{i,k}^{(3)}$, $i\in\{1,\ldots,6\}$, $k\in\{10,11\}$. E.g., $I_{i,10}^{(3)}$ is defined in terms of cubic curves of the pencil
 $$
 \cK_{i,10}=\cP(3; p_i,p_7,p_8,p_9,p_{10}^2,p_{11}).
 $$
Given a point $p$ which is not a base point of $\cE$, there is a unique cubic curve of $\cK_{i,10}$ through $p$ and a unique sextic curve of $\cE$ through $p$. We set $I_{i,10}^{(3)}(p)=p'$, where $p'$ is the unique further intersection point of these two curves. This intersection is unique, since the intersection number of the cubic with the sextic is $3\times 6=18$, while the intersections at the points $p_i,p_7,p_8,p_9,p_{10},p_{11}$, and $p$ count as $1+2+2+2+6+3+1=17$. Involutions $I_{i,11}^{(3)}$ are defined similarly.
 \smallskip
  
  {\em 3)} Involutions $I_{i,j}^{(2)}$, $i,j\in\{7,8,9\}$, defined in terms of conics of the pencil
  $$
 \cC_{i,j}=\cP(2; p_i,p_j,p_{10},p_{11}).
 $$ 
Given a point $p$ which is not a base point of $\cE$, there is a unique conic of $\cC_{i,j}$ through $p$ and a unique sextic curve of $\cE$ through $p$. We set $I^{(2)}_{i,j}(p)=p'$, where $p'$ is the unique further intersection point of these two curves. This intersection is unique, since the intersection number of the conic with the sextic is $2\times 6=12$, while the intersections at the points $p_i,p_j,p_{10},p_{11}$, and $p$ count as $2+2+3+3+1=11$. 
 \end{definition}

%%%%%%%%%%%%%%%%%%%%%%%%%%%%%%%%%%%%%%%%%%%%%%%%%%%%%

\section{Quadratic Manin maps for special cubic pencils}
\label{section cubic}

In this section, we consider pencils of cubic curves,
$$
\cE=\cP(3;p_1,\ldots,p_9).
$$
Generically, a Manin involution for a cubic pencil is a birational map of degree 5 for which all base points of the pencil are singularities (indeterminacy points). Indeed, consider $I_{\cE,p_i}$. For any base point $p_j\neq p_i$, all curves $\cC_\lambda$ of the pencil pass through $p_i,p_j$, and have one further intersection point with the line $(p_ip_j)$. As a result, $I_{\cE,p_i}$ blows up any base point  $p_j$ $(j\neq i)$ to the line $(p_ip_j)$. For the same reason  $I_{\cE,p_j}$ blows down this line to $p_i$. Thus:
\begin{proposition} \label{prop cubic Manin transf}
For a cubic pencil, the Manin transformation $I_{\cE,p_i}\circ I_{\cE,p_j}$ for any two distinct base points $p_i$ and $p_j$ is regular at $p_i$ and maps it to $p_j$.
\end{proposition}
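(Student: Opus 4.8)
The plan is to exploit the blow-up--blow-down cancellation already visible in the paragraph preceding the statement, and to make it rigorous by passing to the rational elliptic surface carrying the pencil. Recall the two facts established there, now applied with the roles of $p_i$ and $p_j$ interchanged: the involution $I_{\cE,p_j}$ blows up the base point $p_i$ to the line $(p_ip_j)$, while $I_{\cE,p_i}$ blows down the line $(p_ip_j)$ to the point $p_j$. Hence in the composition $I_{\cE,p_i}\circ I_{\cE,p_j}$ (first $I_{\cE,p_j}$, then $I_{\cE,p_i}$) the exceptional line produced at $p_i$ by the first map is precisely the line contracted by the second map, so one expects the indeterminacy to disappear and $p_i$ to be sent to $p_j$. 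The first thing I would do is verify that these two operations really are inverse to one another along the line, i.e. that the point of $(p_ip_j)$ produced from $p_i$ on a fiber $C_\lambda$ by $I_{\cE,p_j}$ is carried back by $I_{\cE,p_i}$ to the point $p_j$ of the \emph{same} fiber; this is where the cancellation genuinely happens and must be checked rather than merely asserted.

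To organize this cleanly I would compute the composition fiberwise via the group law. On a smooth member $C_\lambda$ fix any group structure and write $[p]$ for the class of a point $p$. Since three points of a cubic are collinear exactly when their classes sum to the fixed class $c$ of a line section, the third intersection of $(p_0p)$ with $C_\lambda$ has class $c-[p_0]-[p]$, so $I_{C_\lambda,p_0}$ acts as $[p]\mapsto c-[p_0]-[p]$. Composing, the constant $c$ drops out and I get
\[
I_{C_\lambda,p_i}\circ I_{C_\lambda,p_j}\colon [p]\longmapsto [p]+[p_j]-[p_i],
\]
a translation on each fiber; in particular it sends $[p_i]$ to $[p_j]$. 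As $p_i$ and $p_j$ lie on every member of the pencil, this fixes the value of the composition at $p_i$ to be $p_j$ independently of the fiber, and it shows the composition is an automorphism (with no indeterminacy) on every smooth fiber.

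To upgrade this fiberwise statement to genuine regularity of the planar map at $p_i$, I would pass to the surface $S=\mathrm{Bl}_{p_1,\dots,p_9}(\bbP^2)$, the rational elliptic surface fibered over $\bbP^1$ by the pencil, with exceptional divisors $E_1,\dots,E_9$; each $E_k$ is a section meeting every fiber in the point $p_k$. The fiberwise translations by $[p_j]-[p_i]$ glue to a single fibration-preserving automorphism $\Psi$ of $S$, and $\Psi$ carries the section $E_i$ isomorphically onto $E_j$. Writing $\pi\colon S\to\bbP^2$ for the blow-down, the map $I_{\cE,p_i}\circ I_{\cE,p_j}$ is identified with $\pi\circ\Psi\circ\pi^{-1}\colon\bbP^2\dashrightarrow\bbP^2$; since $\pi\circ\Psi$ is a morphism contracting $E_i=\pi^{-1}(p_i)$ to the single point $\pi(E_j)=p_j$, the universal property of the blow-down of the $(-1)$-curve $E_i$ lets it descend to a map regular at $p_i$ with value $p_j$.

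The main obstacle is the gluing step flagged above: justifying that the fiberwise translations assemble into a biregular automorphism $\Psi$ of $S$, i.e. that no indeterminacy survives over the singular fibers. I would handle this by invoking relative minimality of $S\to\bbP^1$: a fibration-preserving birational self-map of a relatively minimal elliptic surface that is an isomorphism on the generic fiber is automatically biregular, and the relative translation taking the section $E_i$ to $E_j$ is such a map. Alternatively one can bypass $S$ and argue the cancellation purely locally at $p_i$, resolving the indeterminacy of $I_{\cE,p_j}$ by a single blow-up of $p_i$ and checking that the resulting exceptional line is contracted to $p_j$ by $I_{\cE,p_i}$; the group-law computation above is exactly what guarantees that this contraction is to the single point $p_j$, and hence that the composite extends regularly there.
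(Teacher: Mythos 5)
Your proposal is correct, and its opening paragraph is in fact the paper's \emph{entire} proof: Proposition~\ref{prop cubic Manin transf} is stated with no separate proof environment, being presented (``Thus:'') as an immediate consequence of the preceding observation that $I_{\cE,p_j}$ blows up the base point $p_i$ to the line $(p_ip_j)$ while $I_{\cE,p_i}$ blows down that same line to $p_j$. Where you genuinely go beyond the paper is in making this cancellation rigorous. Your group-law computation, identifying $I_{\cE,p_i}\circ I_{\cE,p_j}$ on each smooth fiber as the translation $[p]\mapsto [p]+[p_j]-[p_i]$, is correct and is consistent with how the paper itself later treats such compositions (in Section~4 the analogous composition is characterized as the unique map acting on the curves of the pencil as the shift sending one base point to another). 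Your surface-theoretic descent is also sound: the blow-up $S$ of the nine base points is relatively minimal, since a $(-1)$-curve $E$ satisfies $E\cdot(-K_S)=1$ and hence cannot lie in an anticanonical fiber, so the standard fact that a fiber-preserving birational self-map which is biregular on the generic fiber extends to an automorphism applies, and the contraction of $E_i$ to the single point $p_j$ descends through the blow-down by the universal property. The trade-off between the two treatments is clear: the paper's version is economical and stays entirely within plane geometry, treating the blow-up/blow-down cancellation as self-evident in the singularity-confinement spirit it uses throughout; your version pays the price of invoking the theory of relatively minimal elliptic fibrations, but in exchange it actually \emph{proves} the regularity at $p_i$ that the paper only asserts, and it isolates the one point that really needs checking (that the contraction of the exceptional line lands at the single point $p_j$, uniformly over all fibers).
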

For a similar reason, some base points become regular points of Manin involutions if there are collinearities among them:
\begin{proposition} \label{prop cubic Manin transf collinear}
For a cubic pencil, if three distinct base points $p_i,p_j,p_k$ are collinear, then $I_{\cE,p_i}$ is regular at  $p_j$ and at $p_k$ and interchanges these two points.
\end{proposition}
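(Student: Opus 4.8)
The plan is to revisit, under the collinearity hypothesis, the precise source of indeterminacy of $I_{\cE,p_i}$ at a base point, as identified in the discussion preceding Proposition \ref{prop cubic Manin transf}. There, the point $p_j$ gets blown up to the whole line $(p_ip_j)$ precisely because, as $p\to p_j$, the third intersection point of the line $(p_ip)$ with the member $C_\lambda\ni p$ sweeps out that line. My key observation is that collinearity pins this third intersection point down to $p_k$, thereby collapsing the exceptional fibre to a single point and resolving the indeterminacy.

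First I would record that $p_i,p_j,p_k$ collinear means $(p_ip_j)=(p_ip_k)=(p_jp_k)=:\ell$. Since every member $C_\lambda$ of the pencil passes through all nine base points, in particular through $p_i,p_j,p_k\in\ell$, B\'ezout's theorem gives $\ell\cdot C_\lambda=3$; as $p_i,p_j,p_k$ are three distinct points of $\ell\cap C_\lambda$, I conclude $\ell\cap C_\lambda=\{p_i,p_j,p_k\}$ for every member $C_\lambda$ not containing $\ell$ as a component. (Generically there is a single exceptional member $C_*=\ell\cup Q$ splitting off the line, with $Q$ a conic through the remaining six base points; this member will have to be treated separately.)

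Next I would compute the limit defining $I_{\cE,p_i}(p_j)$. Take $p\to p_j$ with $p\notin\ell$, let $C$ be the unique member of the pencil through $p$, and recall that $I_{\cE,p_i}(p)$ is the residual intersection point of the line $(p_ip)$ with $C$, so that $(p_ip)\cap C=\{p_i,p,I_{\cE,p_i}(p)\}$. As $p\to p_j$ the line $(p_ip)$ tends to $\ell$ while $C$ tends to some member $C_0\ni p_j$ (depending on the direction of approach, which is exactly why $p_j$ is generically indeterminate); by continuity of intersection points the triple $\{p_i,p,I_{\cE,p_i}(p)\}$ converges to $\ell\cap C_0=\{p_i,p_j,p_k\}$. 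Since the first two entries tend to $p_i$ and $p_j$, the residual point satisfies $I_{\cE,p_i}(p)\to p_k$. Crucially this limit is \emph{independent} of the direction of approach, because $\ell\cap C_0=\{p_i,p_j,p_k\}$ for every limiting member $C_0\neq C_*$; as these fill a dense set of directions, the rational map $I_{\cE,p_i}$ extends regularly at $p_j$ with value $p_k$. Exchanging the roles of $j$ and $k$ shows it is regular at $p_k$ with value $p_j$, and since $I_{\cE,p_i}$ is an involution the two statements are consistent: it interchanges $p_j$ and $p_k$.

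The main obstacle I anticipate is the rigorous justification of the ``continuity of intersection points'' step together with the exceptional direction in which the limiting member is $C_*=\ell\cup Q$: along that direction the line $(p_ip)$ degenerates into a component of the limit curve and the residual intersection becomes a priori ill-defined. The clean way around this is precisely the argument above, that $I_{\cE,p_i}(p)\to p_k$ already holds along a dense set of approaches; since $I_{\cE,p_i}$ is given by homogeneous polynomials, the mere existence of this limit suffices to conclude that $p_j$ is not an indeterminacy point and that the regular value there equals $p_k$, without having to analyse the single exceptional direction.
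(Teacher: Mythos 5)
Your key geometric observation is the correct one, and it is the same one the paper relies on: since every member $C_\lambda$ passes through the three collinear base points, B\'ezout forces $\ell\cap C_\lambda=\{p_i,p_j,p_k\}$ for every member not containing $\ell$, so the third intersection point, which in the generic case sweeps out the line $(p_ip_j)$ as $\lambda$ varies (this is exactly why $p_j$ is blown up to that line in the discussion preceding Proposition \ref{prop cubic Manin transf}), is now pinned at $p_k$. However, your final step has a genuine gap. You claim that since $I_{\cE,p_i}(p)\to p_k$ along a dense set of approaches (all those whose limiting member is not $C_*=\ell\cup Q$), the map must be regular at $p_j$, ``without having to analyse the single exceptional direction.'' This removable-singularity principle is false for planar rational maps. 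Consider $g=[x^2:xy:y^2+xz]$, which is birational (its inverse is $[X^2:XY:XZ-Y^2]$): along \emph{every} line through $[0:0:1]$ one computes $g(p)\to[0:0:1]$, yet $[0:0:1]$ is an indeterminacy point of $g$, whose graph fibre is the entire line $\{X=0\}$; for instance, along the conics $\{y^2+xz=cx^2\}$, all tangent to the direction $\{x=0\}$, the limit is $[0:1:0]$. In your situation the approaches you do not control are precisely those tangent to $\ell$ (equivalently, with pencil parameter tending to that of $C_*$), i.e.\ exactly the dangerous tangential ones: indeterminacy hidden at points infinitely near $p_j$ above that one direction is not excluded by the density of the good limits, so your argument as written does not prove regularity.

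The gap can be closed by transferring the density argument from limits of the map to closed curves on a surface, which is what the paper's reasoning amounts to. Blow up the nine base points to obtain the rational elliptic surface $S$ with exceptional curves $E_1,\ldots,E_9$ and projection $\pi:S\to\bbP^2$. The fibration is relatively minimal (a $(-1)$-curve meets the fibre class $-K_S$ positively, so it cannot lie in a fibre), hence the fibrewise Manin involution extends to an automorphism $\iota$ of $S$. Your B\'ezout observation says that for every $\lambda$ with $C_\lambda$ smooth, $\iota$ sends $E_j\cap\widetilde{C}_\lambda$ to $E_k\cap\widetilde{C}_\lambda$; thus the irreducible curves $\iota(E_j)$ and $E_k$ share infinitely many points and therefore coincide. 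Consequently the resolution $(\pi,\pi\circ\iota)$ of $I_{\cE,p_i}$ has graph fibre $\pi(\iota(E_j))=\pi(E_k)=\{p_k\}$ over $p_j$, a single point, and regularity of $I_{\cE,p_i}$ at $p_j$ with value $p_k$ follows from Zariski's main theorem; the reducible member $C_*=\ell\cup Q$ is handled automatically by taking closures, with no case analysis. Exchanging the roles of $j$ and $k$ then finishes the proof exactly as you indicate.
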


We will say that the nine points $A_i$, $B_i$, $C_i$, $i=1,2,3$, form a \emph{Pascal configuration}, if the six points $A_1,A_2A_3,C_1,C_2,C_3$ lie on a conic, and 
\[
B_1=(A_2C_3)\cap(A_3C_2), \quad B_2=(A_3C_1)\cap(A_1C_3), \quad B_3=(A_1C_2)\cap(A_2C_1).
\]
By Pascal's theorem, the points $B_1,B_2,B_3$ are collinear.

%%%%%%%%%%%%%%%%%%%%%%%%%%%%%%%%%%%%%%%%%%%%%%%%%%%%%%%%%%%%%%%%%%%%
\begin{figure}
\begin{center}
\includegraphics[width=0.7\textwidth]{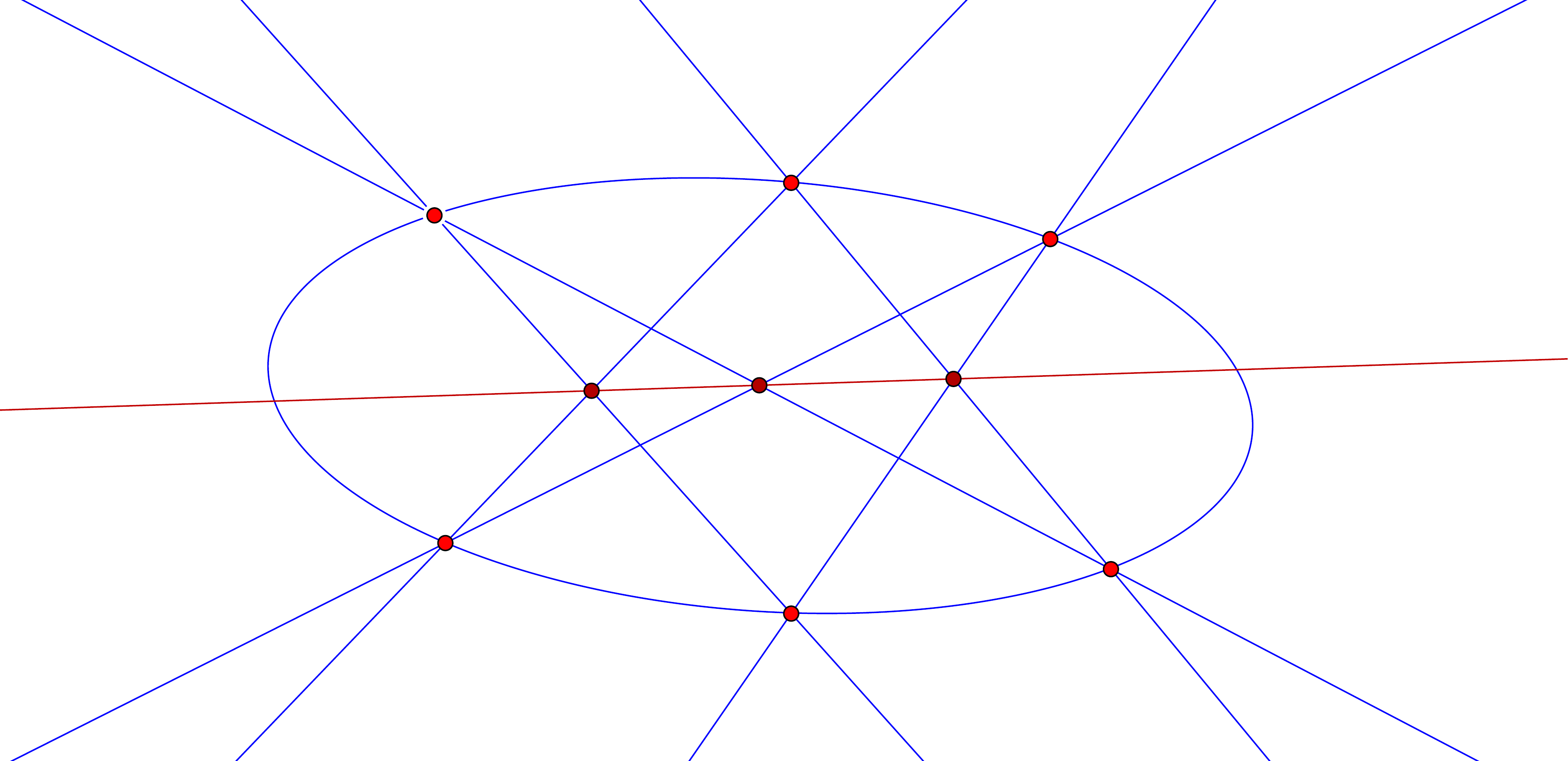}
\put(-239,123){\scalebox{0.8}{$A_1$}}
\put(-167,130){\scalebox{0.8}{$A_2$}}
\put(-117,118){\scalebox{0.8}{$A_3$}}
\put(-239,33){\scalebox{0.8}{$C_1$}}
\put(-167,18){\scalebox{0.8}{$C_2$}}
\put(-103,30){\scalebox{0.8}{$C_3$}}
\put(-132,67){\scalebox{0.8}{$B_1$}}
\put(-173,67){\scalebox{0.8}{$B_2$}}
\put(-209,67){\scalebox{0.8}{$B_3$}}
\caption{Pascal configuration of base points of a cubic pencil.}
\label{Fig Pascal}
\end{center}
\end{figure}
%%%%%%%%%%%%%%%%%%%%%%%%%%%%%%%%%%%%%%%%%%%%%%%%%%%%%%%%%%%%%%%%%%%%
\begin{theorem}
Let the points $A_i$, $B_i$, $C_i$, $i=1,2,3$, form a Pascal configuration.
Consider the pencil $\cE$ of cubic curves with these base points. Then the map
\begin{eqnarray}
f & = & I_{\cE,A_1}\circ I_{\cE,B_1}\; = \; I_{\cE,B_1}\circ I_{\cE,C_1} \label{f Pascal 1}\\
 & = & I_{\cE,A_2}\circ I_{\cE,B_2}\; = \; I_{\cE,B_2}\circ I_{\cE,C_2} \label{f Pascal 2}\\
 & = & I_{\cE,A_3}\circ I_{\cE,B_3}\; = \; I_{\cE,B_3}\circ I_{\cE,C_3}  \label{f Pascal 3}
\end{eqnarray}
is a birational map of degree 2, with $\mathcal I(f)=\{C_1,C_2,C_3\}$ and $\mathcal I(f^{-1})=\{A_1,A_2,A_3\}$. It has the following singularity confinement patterns:
\begin{align}
& (C_2C_3)\to A_1\to B_1\to C_1\to (A_2A_3), \label{Pascal sing conf 1}\\
& (C_3C_1)\to A_2\to B_2\to C_2\to (A_3A_1),  \label{Pascal sing conf 2}\\
& (C_1C_2)\to A_3\to B_3\to C_3\to (A_1A_2).  \label{Pascal sing conf 3}
\end{align}
\end{theorem}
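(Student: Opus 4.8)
The plan is to reduce everything to the fibrewise action of the maps on the cubics of the pencil, where each composition of two Manin involutions becomes a translation. On a smooth member $C$ of $\cE$, collinear triples of points have point-classes summing to the restricted hyperplane class $\kappa=[\mathcal O_C(1)]$, so $I_{C,p_0}(x)=\kappa-[x]-[p_0]$ and hence $I_{\cE,p}\circ I_{\cE,q}$ acts on every fibre as the translation $x\mapsto x+([q]-[p])$ by the class $[q-p]\in\operatorname{Pic}^0(C)$, independent of any choice of origin. Since a birational map preserving the pencil is determined by its fibrewise action, to prove the six identities \eqref{f Pascal 1}--\eqref{f Pascal 3} it suffices to show that all six compositions induce one and the same translation, namely by the class $\tau:=[B_1-A_1]$.

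First I would record the linear equivalences carried by the configuration. On every member of $\cE$ the seven collinearities $A_2B_1C_3$, $A_3B_1C_2$, $A_3B_2C_1$, $A_1B_2C_3$, $A_1B_3C_2$, $A_2B_3C_1$ and the Pascal line $B_1B_2B_3$ give relations of the form $[\,\cdot\,]+[\,\cdot\,]+[\,\cdot\,]=\kappa$, while the conic through $A_1,A_2,A_3,C_1,C_2,C_3$ gives $[A_1]+[A_2]+[A_3]+[C_1]+[C_2]+[C_3]=2\kappa$. Adding the relations of the two lines $A_2B_1C_3$ and $A_3B_1C_2$ through $B_1$ and eliminating $[A_2]+[A_3]+[C_2]+[C_3]$ by the conic relation yields $2[B_1]=[A_1]+[C_1]$, equivalently $[B_1-A_1]=[C_1-B_1]$; this identifies the two representations of $f$ within each of \eqref{f Pascal 1}--\eqref{f Pascal 3}. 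Subtracting the relations of $A_2B_1C_3$ and $A_1B_2C_3$ gives $[B_1-A_1]=[B_2-A_2]$, and likewise $[B_1-A_1]=[B_3-A_3]$; this identifies the three rows. Hence all six maps coincide with a single translation $f$ by $\tau$.

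Next I would establish the singularity confinement patterns \eqref{Pascal sing conf 1}--\eqref{Pascal sing conf 3}, which simultaneously pin down $\mathcal I(f)$ and $\mathcal I(f^{-1})$. The finite part is immediate: $[A_i]+\tau=[B_i]$ and $[B_i]+\tau=2[B_i]-[A_i]=[C_i]$ give $f(A_i)=B_i$ and $f(B_i)=C_i$. For the contraction $(C_{i+1}C_{i+2})\to A_i$, the third intersection of the fixed line $(C_{i+1}C_{i+2})$ with a variable fibre $C$ has class $\kappa-[C_{i+1}]-[C_{i+2}]$; one checks from the relations above that $2[A_i]+[C_{i+1}]+[C_{i+2}]=\kappa+[B_i]$, whence $(\kappa-[C_{i+1}]-[C_{i+2}])+\tau=[A_i]$ on \emph{every} fibre, so the whole line is sent to the single point $A_i$. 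Dually, the analogous relations give $[C_i]+\tau=\kappa-[A_{i+1}]-[A_{i+2}]$, which is the third intersection of the fixed line $(A_{i+1}A_{i+2})$ with $C$; as $C$ runs through the pencil this point sweeps out the whole line, so $C_i$ is blown up to $(A_{i+1}A_{i+2})$. Reading these patterns in both directions yields $\mathcal I(f)=\{C_1,C_2,C_3\}$ and $\mathcal I(f^{-1})=\{A_1,A_2,A_3\}$.

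Finally, the degree. Having shown that $f$ is birational with indeterminacy locus exactly the three simple points $C_1,C_2,C_3$, each blown up to a single line, the homaloidal net $f^{*}|\mathcal O_{\bbP^2}(1)|$ has base points $C_1,C_2,C_3$ with multiplicities $(1,1,1)$; Noether's equations $\sum_\nu m_\nu=3(\deg f-1)$ and $\sum_\nu m_\nu^{2}=(\deg f)^{2}-1$ then force $\deg f=2$, and the net is precisely the net of conics through $C_1,C_2,C_3$. I expect the genuine obstacle to lie in the bookkeeping of the third step: a priori each of the two degree-$5$ Manin involutions is singular at all nine base points, and the crux is to show that in the composition precisely six of them (the $A_i$ and $B_i$) become regular while only $C_1,C_2,C_3$ survive as indeterminacy points -- equivalently, that the contractions and blow-ups are exactly those listed and that no further base points are hidden. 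Once this singularity structure is secured, the passage to $\deg f=2$ is routine.
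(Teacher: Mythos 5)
Your proof is correct, and its first half coincides with the paper's: both identify each composition $I_{\cE,p}\circ I_{\cE,q}$ with the fibrewise translation by $[q]-[p]$, and your subtraction of collinearity relations through $C_3$, $A_3$, $B_3$ is literally the same computation as the paper's criterion $P_1+P_2=P_3+P_4 \Leftrightarrow (P_1P_2)\cap(P_3P_4)\in\cC$ applied to the concurrences at those base points (you route the within-row identity $[B_1]-[A_1]=[C_1]-[B_1]$ through the conic relation, the paper through the Pascal line; the two systems of relations are equivalent). Where you genuinely diverge is the singularity analysis. The paper relies on two previously established propositions about individual Manin involutions -- that $I_{\cE,p_i}\circ I_{\cE,p_j}$ is regular at $p_i$ (blow-up followed by blow-down), and that $I_{\cE,p_i}$ swaps collinear base points $p_j,p_k$ -- and then gets the blow-up relations combinatorially, e.g. $f(C_3)=I_{\cE,A_2}\circ I_{\cE,B_2}(C_3)=I_{\cE,A_2}(A_1)=(A_1A_2)$. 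You instead do everything by fibrewise divisor-class bookkeeping: your identities $2[A_i]+[C_{i+1}]+[C_{i+2}]=\kappa+[B_i]$ and $[C_i]+\tau=\kappa-[A_{i+1}]-[A_{i+2}]$ (both of which check out against the seven lines plus the conic) show directly that the line $(C_{i+1}C_{i+2})$ is contracted to $A_i$ and that $C_i$ is blown up to $(A_{i+1}A_{i+2})$, while $A_i$ and $B_i$ have constant fibrewise images and hence are regular points. This buys uniformity: one mechanism delivers the six identities, the confinement patterns, and the indeterminacy sets, and it also disposes of the worry you flag at the end -- no hidden indeterminacy can occur away from the nine base points, because every other point has a single fibrewise well-defined image. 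What the paper's route buys is geometric transparency and reusable lemmas, stated independently of the Pascal configuration. Finally, you are more explicit than the paper on the degree: the paper leaves $\deg f=2$ implicit in having three simple indeterminacy points, whereas your appeal to Noether's equations closes that step cleanly.
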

\begin{proof}
We start with the following property of the addition law on a nonsingular cubic curve $\cC$. Let $P_1,P_2,P_3,P_4\in\cC$, then 
$$
P_1-P_3=P_4-P_2 \;\; \Leftrightarrow \;\; P_1+P_2=P_3+P_4 \;\; \Leftrightarrow \;\; (P_1P_2)\cap (P_3P_4)\in \cC.
$$
Thus, on any cubic curve $\cC\in\cE$, we have the following relations:
\begin{align*}
& (A_1B_2)\cap (A_2B_1)=C_3\in\cC\;\; \Rightarrow \;\; A_1-B_1=A_2-B_2 \;\; \Rightarrow \;\; I_{\cE,A_1}\circ I_{\cE,B_1}=I_{\cE,A_2}\circ I_{\cE,B_2}, \\
& (B_1C_2)\cap (B_2C_1)=A_3\in\cC\;\; \Rightarrow \;\; B_1-C_1=B_2-C_2 \;\; \Rightarrow \;\; I_{\cE,B_1}\circ I_{\cE,C_1}=I_{\cE,B_2}\circ I_{\cE,C_2}, \\
& (A_1C_2)\cap (B_1B_2)=B_3\in\cC\;\; \Rightarrow \;\; A_1-B_1=B_2-C_2 \;\; \Rightarrow \;\; I_{\cE,A_1}\circ I_{\cE,B_1}=I_{\cE,B_2}\circ I_{\cE,C_2}.
\end{align*}
This proves the coincidence of all six representations in \eqref{f Pascal 1}--\eqref{f Pascal 3}. Now it follows from Proposition \ref{prop cubic Manin transf} that $f$ has only three indeterminacy points, $\cI(f)=\{C_1,C_2,C_3\}$, and similarly, $\ \cI(f^{-1})=\{A_1,A_2,A_3\}$. Moreover, Proposition \ref{prop cubic Manin transf} implies the relations in the middle part of the singularity confinement patterns \eqref{Pascal sing conf 1}--\eqref{Pascal sing conf 3}. The blow-up and blow-down relations are shown with the help of Proposition \ref{prop cubic Manin transf collinear} as follows: $f(C_3)=I_{\cE,A_2}\circ I_{\cE,B_2}(C_3)=I_{\cE,A_2}(A_1)=(A_1A_2)$.
\end{proof}

\begin{theorem}
For a pencil of cubic curves with the base points building a Pascal configuration, perform a linear projective transformation of $\bbP^2$ sending the Pascal line $\ell(B_1,B_2,B_3)$ to infinity. Let $(x,y)$ be the affine coordinates on the affine part $\bbC^2\subset\bbP^2$. In these coordinates, the map $f:(x,y)\mapsto (\t x,\t y)$ defined by \eqref{f Pascal 1}--\eqref{f Pascal 3} is characterized by the following property. There exist constants $a_1,\ldots,a_9\in\bbC$ such that $f$ admits a representation through two bilinear equations of motion of the form
\begin{equation}\label{Kahan 1}
\renewcommand{\arraystretch}{1.3}
\left\{\begin{array}{l}
\t x-x \; = \; a_2 x \t x+a_3(x\t y+\t x y)+a_4y\t y+a_6(x+\t x)+a_7(y+\t y)+a_9, \\
\t y-y \; = \; -a_1x\t x-a_2(x\t y+\t xy)-a_3y\t y-a_5(x+\t x)-a_6(y +\t y)-a_8. 
\end{array}\right.
\end{equation}
These equations serve as the Kahan discretization \cite{K, PPS2} of the Hamiltonian equations of motion
\begin{equation}\label{cubic Ham system}
\renewcommand{\arraystretch}{1.3}
\left\{\begin{array}{l}
\dot x \; = \; \partial H/\partial y\;=\; a_2x^2+2a_3xy+a_4y^2+2a_6x+2a_7y+a_9,   \\
\dot y \; = \; -\partial H/\partial x\;=\; -a_1x^2-2a_2xy-a_3y^2-2a_5x-2a_6y-a_8,  
\end{array}\right.
\end{equation}
for the Hamilton function
\begin{equation} \label{H cubic}
H(x,y)= \tfrac{1}{3}a_1 x^3 +a_2 x^2y +a_3 x y^2 +\tfrac{1}{3}a_4 y^3+a_5 x^2 +2a_6 xy +a_7 y^2 +a_8 x+ a_9y.
\end{equation}
\end{theorem}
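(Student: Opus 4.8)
\emph{Separating the formal from the substantive.} The plan is to peel off a purely formal part before attacking the real one. The formal part is the recognition that \eqref{Kahan 1} is the Kahan discretization of \eqref{cubic Ham system}: applying the symmetric polarization $z^2\mapsto z\tilde z$, $zw\mapsto\tfrac12(z\tilde w+\tilde z w)$, $z\mapsto\tfrac12(z+\tilde z)$ to the right-hand sides of \eqref{cubic Ham system} reproduces the right-hand sides of \eqref{Kahan 1} verbatim, so here one only reads off coefficients. The substantive claim is that the geometric map $f$ of \eqref{f Pascal 1}--\eqref{f Pascal 3} admits the bilinear representation \eqref{Kahan 1} for suitable $a_1,\dots,a_9$. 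My strategy rests on a uniqueness principle: two birational maps that both preserve the pencil $\cE$ and induce on each smooth member the same translation in the group law must coincide, since their composite $g^{-1}\circ f$ then fixes every point of a generic fibre and is therefore the identity on a dense set. It thus suffices to produce a Kahan map \eqref{Kahan 1} that preserves $\cE$ and induces the same fibrewise translation as $f$.

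\emph{Making the Pascal pencil explicit.} Next I would analyse $\cE$ in the chosen coordinates. Sending the Pascal line $\ell(B_1,B_2,B_3)$ to infinity puts the base points $B_1,B_2,B_3$ at infinity, so every cubic of $\cE$ passes through these same three points at infinity. Hence the degree-three parts of all affine members are proportional to a single binary cubic $\Pi(x,y)$ whose linear factors record the asymptotic directions $B_1,B_2,B_3$ (and the Pascal incidences $(A_jC_k)\cap(A_kC_j)=B_i$ degenerate into three pairs of parallel chords of the conic). Moreover $\cE$ then contains the reducible member consisting of the line at infinity together with the conic $\cQ$ through $A_1,A_2,A_3,C_1,C_2,C_3$ guaranteed by the Pascal configuration. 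Consequently $\cE$ is the affine pencil $\{\gamma(x,y)-\lambda\,q(x,y)=0\}$, with $\gamma$ a fixed cubic of leading part $\Pi$ and $q$ the affine equation of $\cQ$. Finally, using $I_{\cE,p_0}(P)=-p_0-P$ on each member (group law normalised at an inflection point), one computes $f(P)=I_{\cE,A_1}\circ I_{\cE,B_1}(P)=P+(B_1-A_1)$, so $f$ acts as the translation by $B_1-A_1$.

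\emph{Matching the Kahan side.} By the known theory of Kahan's method for cubic Hamiltonians \cite{K, PPS2, PSS}, the map \eqref{Kahan 1} carries a rational modified integral $\tilde H=N/D$ in which $N$ is a cubic whose leading part is the cubic part $\tfrac13 a_1x^3+a_2x^2y+a_3xy^2+\tfrac13 a_4y^3$ of $H$ and $D$ is a conic; hence it preserves the cubic pencil $\{N-\lambda D=0\}$ and acts on each member as a translation. This pencil has exactly the architecture of $\cE$: three base points at infinity (the roots of the leading cubic of $N$), six finite base points on $\{D=0\}$, and a reducible member ``line at infinity $\cup\,\{D=0\}$''. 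I would choose $a_1,\dots,a_9$ so that the cubic part of $H$ has its roots along the directions $B_1,B_2,B_3$, so that $\{D=0\}$ is the Pascal conic $\cQ$, and so that $N=\gamma$, thereby identifying the Kahan pencil with $\cE$. It then remains to confirm that the fibrewise translation of \eqref{Kahan 1} equals $B_1-A_1$; I would read this off from its singularity-confinement data, showing that \eqref{Kahan 1} reproduces the base-point orbit $A_1\to B_1\to C_1$ of \eqref{Pascal sing conf 1}--\eqref{Pascal sing conf 3}; by the group-law reasoning of Proposition \ref{prop cubic Manin transf} this pins the translation to $B_1-A_1$, as for $f$. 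The uniqueness principle then gives $f=$ the map \eqref{Kahan 1}.

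\emph{The main obstacle and a fallback.} The hard part is exactly this matching: one must show that every parallel Pascal configuration arises as the invariant pencil of some cubic-Hamiltonian Kahan map, and that the induced translations agree in both direction and magnitude. The first is a parameter count (the nine $a_i$, modulo scaling and the affine freedom fixing the line at infinity, must surject onto the moduli of such configurations); the second is the delicate point, because any two fibre-preserving quadratic maps on the same pencil differ by an arbitrary translation, and I expect to settle it by the base-point-orbit comparison above. As a fallback, should the moduli matching prove awkward, one can compute $f$ directly: by the preceding theorem $f$ is the quadratic Cremona map with $\cI(f)=\{C_1,C_2,C_3\}$ and $\cI(f^{-1})=\{A_1,A_2,A_3\}$, and writing it in coordinates adapted to the parallel-chord configuration and eliminating, one verifies that its graph is cut out by two bilinear equations and reads off the constants $a_i$.
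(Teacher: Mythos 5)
Your reduction strategy is sound as far as it goes, and it is genuinely different from the paper's own proof, which consists of a single sentence deferring to a symbolic MAPLE computation presented in \cite{PSS}. The uniqueness principle (two birational maps preserving $\cE$ and inducing the same fibrewise translation coincide), the identification of the reducible member $\ell_\infty\cup\cQ$ of the pencil via the Cayley--Bacharach dimension count, and the computation that $f=I_{\cE,A_1}\circ I_{\cE,B_1}$ acts on each fibre as translation by $B_1-A_1$ are all correct. The problem is that the step you yourself flag as ``the hard part'' is not a technical loose end but is precisely the substantive content of the theorem, and your proposal does not close it. You must produce a cubic Hamiltonian $H$ whose Kahan map has invariant pencil equal to the \emph{given} $\cE$. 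The nine coefficients $a_1,\dots,a_9$ do not let you choose $N$ and $D$ independently: by the very structure you quote, $D$ is the Kahan conic $\det\bigl(I-\tfrac{1}{2}JH''\bigr)$ and $N$ is likewise determined by $H$, so requiring $\{D=0\}=\cQ$ and $N\in\cE$ is a system of polynomial equations in the $a_i$ whose solvability for the prescribed configuration is exactly what must be proved. A parameter count ($9=9$) can at best be upgraded to dominance of the correspondence $H\mapsto\cE_H$, and even that needs an argument you do not supply (e.g.\ finiteness of fibres, which would follow since the bilinear form \eqref{Kahan 1} determines the $a_i$ and only finitely many translations of $\cE$ can be quadratic Cremona maps); dominance then yields the statement only for a Zariski-dense open set of Pascal configurations, not for a configuration given in advance, so a further closure or degeneration argument would still be required.

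There is also a circularity concern in what you import as ``known theory'': the facts that the invariant pencil of a cubic-Hamiltonian Kahan map has exactly the Pascal architecture and that its singularity-confinement orbits are $A_i\to B_i\to C_i$ (which you need for the translation-matching step) constitute the main theorem of \cite{PSS}, established there by the same symbolic computation that the present paper cites as its proof. So your main route does not avoid that computation; it reduces the theorem to its computationally proven converse plus the unproven surjectivity above. Your fallback --- writing the quadratic Cremona map with $\mathcal I(f)=\{C_1,C_2,C_3\}$, $\mathcal I(f^{-1})=\{A_1,A_2,A_3\}$ in coordinates adapted to the configuration and verifying the bilinear form \eqref{Kahan 1} directly --- is essentially the paper's actual proof, but as proposed it is a plan for a computation rather than a proof.
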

\begin{proof} 
This is a result of a symbolic computation with MAPLE, presented in \cite{PSS}. 
\end{proof}

%%%%%%%%%%%%%%%%%%%%%%%%%%%%%%%%%%%%%%%%%%%%%%%%
\section{Quadratic Manin maps for special pencils of the type $(4;8^1 2^2)$} 
\label{section quartic}

We describe the geometry of base points of a pencil of the type $(4;8^1 2^2)$ for which one can find compositions of Manin involutions which are quadratic Cremona maps.
\begin{itemize}
\item Let $p_2,p_3,p_6,p_7$ be four points of $\bbP^2$ in general position (no three of them collinear). 
\item Consider three intersection points of three pairs of opposite sides of the complete quadrangle with these vertices:
\begin{equation}
A=(p_2p_6)\cap (p_3p_7), \quad B=(p_2p_3)\cap (p_6p_7), \quad C=(p_2p_7)\cap(p_3p_6). 
\end{equation}
Consider the projective involutive automorphism $\sigma$ of $\bbP^2$ fixing the point $C$ and the line $\ell=(AB)$ (pointwise). The points of the pairs $(p_2,p_7)$ and $(p_3,p_6)$ correspond under $\sigma$.
\item Choose a point $p_9\in (p_3p_7)$, and define $p_{10}\in (p_2p_6)$ so that $p_9,p_{10}$ correspond under $\sigma$, or, in other words, so that the line $(p_9p_{10})$ passes through $C$.
\item  Let $\cC\in\cP(2;p_2,p_3,p_6,p_7)$ be any conic of the pencil through the specified four points. Define:
\begin{eqnarray*}
p_1 & = & \;\;{\rm the\;\;second\;\; intersection\;\,point\;\;of}\;\;\cC\;\;{\rm with}\;\;  (p_{10}p_7),\\
p_4 & = & \;\;{\rm the\;\;second\;\; intersection\;\,point\;\;of}\;\;\cC\;\;{\rm with}\;\;  (p_9p_6),\\
p_5 & = & \;\;{\rm the\;\;second\;\; intersection\;\,point\;\;of}\;\;\cC\;\;{\rm with}\;\;  (p_{10}p_3), \\
p_8 & = & \;\;{\rm the\;\;second\;\; intersection\;\,point\;\;of}\;\;\cC\;\;{\rm with}\;\;  (p_9p_2).
\end{eqnarray*}
Recall that $A,B,C$ are vertices of a self-polar triangle for $\cC$. In particular, the projective involution $\sigma$ leaves $\cC$ invariant. The points of the pairs $(p_1,p_8)$ and $(p_4,p_5)$ correspond under $\sigma$.
\end{itemize}

We will call the pencil $\cE=\cP(p_1,\ldots,p_8,p_9^2,p_{10}^2)$ a \emph{projectively symmetric quartic pencil with two double points}.

%%%%%%%%%%%%%%%%%%%%%%%%%%%%%%%%%%%%%%%%%%%%%%%%%%%%%%%%%%%%%%%%%%%%
\begin{figure}
\begin{center}
\includegraphics[width=0.8\textwidth]{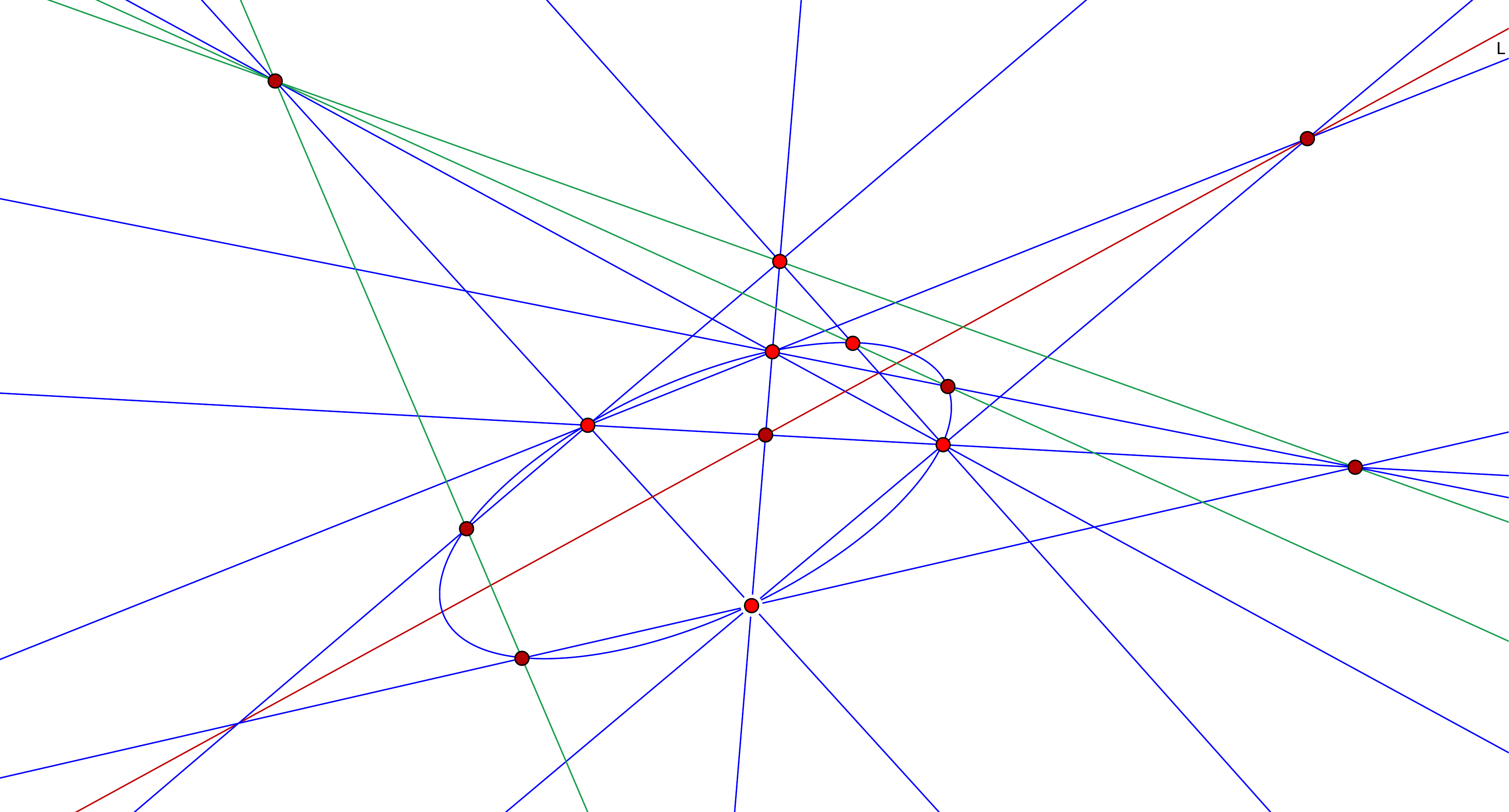}
\put(-305,185){\scalebox{0.8}{$C$}}
\put(-55,172){\scalebox{0.8}{$B$}}
\put(-183,83){\scalebox{0.8}{$A$}}
\put(-43,92){\scalebox{0.8}{$p_{10}$}}
\put(-174,138){\scalebox{0.8}{$p_9$}}
\put(-160,119){\scalebox{0.8}{$p_8$}}
\put(-180,119){\scalebox{0.8}{$p_7$}}
\put(-232,104){\scalebox{0.8}{$p_6$}}
\put(-244,44){\scalebox{0.8}{$p_5$}}
\put(-255,67){\scalebox{0.8}{$p_4$}}
\put(-181,46){\scalebox{0.8}{$p_3$}}
\put(-144,80){\scalebox{0.8}{$p_2$}}
\put(-137,110){\scalebox{0.8}{$p_1$}}
\caption{Geometry of base points of a special quartic pencil $\cP(4;p_1,\ldots,p_8,p_9^2,p_{10}^2)$.}
\label{Fig quartic}
\end{center}
\end{figure}
%%%%%%%%%%%%%%%%%%%%%%%%%%%%%%%%%%%%%%%%%%%%%%%%%%%%%%%%%%%%%%%%%%%%

\begin{theorem}\label{Th f quartic}
Let $\cE=\cP(p_1,\ldots,p_8,p_9^2,p_{10}^2)$ be a projectively symmetric quartic pencil with two double points. Then:

a) The projective involution $\sigma$ can be represented as 
\begin{equation}
\sigma=I^{(2)}_{1,8}=I^{(2)}_{2,7}=I^{(2)}_{3,6}=I^{(2)}_{4,5}.
\end{equation}

b) The map
\begin{eqnarray}
f  & =  & I_{i,k}^{(2)}\circ I_{j,k}^{(2)}  \label{f quartic}\\
   & = & I^{(1)}_9\circ \sigma \;\; = \;\; \sigma\circ I^{(1)}_{10} \label{f quartic 2}
\end{eqnarray}
with $(i,j)\in\{(1,2),(2,3),(3,4),(5,6),(6,7),(7,8)\}$ and $k\in\{1,\ldots,8\}\setminus\{i,j\}$,
is a birational map of degree 2, with $\mathcal I(f)=\{p_4,p_8,p_{10}\}$ and $\mathcal I(f^{-1})=\{p_1,p_5,p_9\}$. It has the following singularity confinement patterns:
\begin{align}
& (p_8p_{10})\to p_1\to p_2\to p_3\to p_4 \to  (p_5p_9), \label{quart sing conf 1}\\
& (p_4p_{10})\to p_5\to p_6\to p_7\to p_8 \to (p_1p_9),  \label{quart sing conf 2}\\
& (p_4p_8)\to p_9\to p_{10}\to  (p_1p_5).  \label{quart sing conf 3}
\end{align}

c) We have:
\begin{equation}\label{f^2}
f^2=I^{(1)}_9 \circ I^{(1)}_{10}.
\end{equation}
\end{theorem}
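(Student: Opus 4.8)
The plan is to read part c) off directly from the two factorizations of $f$ recorded in \eqref{f quartic 2}, exploiting that $\sigma$ is an involution. The point is that \eqref{f quartic 2} furnishes \emph{two} representations of one and the same birational map,
\[
f = I^{(1)}_9\circ\sigma \qquad\text{and}\qquad f=\sigma\circ I^{(1)}_{10},
\]
in which $\sigma$ sits on opposite sides. This asymmetry is exactly what makes the computation work: when forming $f^2=f\circ f$, I would use the first representation for the left copy of $f$ and the second representation for the right copy, so that the two occurrences of $\sigma$ become adjacent.

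Carrying this out, I would write
\[
f^2 = \big(I^{(1)}_9\circ\sigma\big)\circ\big(\sigma\circ I^{(1)}_{10}\big)
    = I^{(1)}_9\circ\big(\sigma\circ\sigma\big)\circ I^{(1)}_{10}.
\]
Now $\sigma$ is a \emph{projective} involution of $\bbP^2$, fixing the point $C$ and the line $\ell=(AB)$ pointwise and coinciding, by part a), with $I^{(2)}_{1,8}=\cdots=I^{(2)}_{4,5}$; hence $\sigma\circ\sigma=\mathrm{id}$ as a biregular automorphism of the whole plane. Inserting this cancellation leaves
\[
f^2 = I^{(1)}_9\circ I^{(1)}_{10},
\]
which is precisely the claimed identity \eqref{f^2}.

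I expect no real obstacle here, since all the substance is discharged by part b): the argument reduces to the bookkeeping of which factorization to substitute into which copy of $f$. The only point deserving a word of caution is that $\sigma\circ\sigma=\mathrm{id}$ must be understood as an equality of birational maps; this is unproblematic precisely because $\sigma$ is projective, hence a genuine automorphism of $\bbP^2$ with $\sigma^2=\mathrm{id}$ holding everywhere, so the composition creates or destroys no indeterminacy points and the cancellation is legitimate. (As a consistency check, one could alternatively confirm the identity by computing the induced translations on a generic fiber of the pencil, but the algebraic cancellation of $\sigma$ is the cleaner route.)
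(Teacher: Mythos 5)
Your derivation of part c) is correct, and it is in fact the argument the paper itself intends: the paper's proof never even spells c) out, treating $f^2=I^{(1)}_9\circ I^{(1)}_{10}$ as an immediate consequence of the two factorizations in \eqref{f quartic 2}. Your care about the cancellation $\sigma\circ\sigma=\mathrm{id}$ is also well placed and correctly resolved: since $\sigma$ is a biregular automorphism of $\bbP^2$, composing and cancelling it creates no birational subtleties, so
\[
f^2=\bigl(I^{(1)}_9\circ\sigma\bigr)\circ\bigl(\sigma\circ I^{(1)}_{10}\bigr)=I^{(1)}_9\circ I^{(1)}_{10}
\]
holds as an identity of birational maps.

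The genuine gap is that this is all you prove, while the statement you were given is the whole of Theorem \ref{Th f quartic}. Parts a) and b) carry essentially all of the content, and you assume them rather than establish them; in particular your starting point --- that $I^{(1)}_9\circ\sigma$ and $\sigma\circ I^{(1)}_{10}$ are equal to each other and to the compositions $I^{(2)}_{i,k}\circ I^{(2)}_{j,k}$ of \eqref{f quartic}, and that this common map $f$ is a quadratic Cremona map with $\cI(f)=\{p_4,p_8,p_{10}\}$, $\cI(f^{-1})=\{p_1,p_5,p_9\}$ and the confinement patterns \eqref{quart sing conf 1}--\eqref{quart sing conf 3} --- is exactly where the work lies. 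The paper proves this by transferring the addition law to the quartic pencil via a quadratic Cremona reduction to a cubic pencil, exhibiting eight sextuples of base points lying on (partly reducible) conics through $p_9,p_{10}$, deducing from these the coincidence of all representations in \eqref{f quartic}, identifying $I^{(1)}_9\circ\sigma$ and $\sigma\circ I^{(1)}_{10}$ as the same translation on each fiber (sending $p_1\to p_2\to p_3\to p_4$, resp. $p_5\to p_6\to p_7\to p_8$), and only then reading off the indeterminacy sets and the blow-up/blow-down relations. None of that is replaced by your bookkeeping; without b), part c) has no foundation, and part a) (that $\sigma=I^{(2)}_{1,8}=I^{(2)}_{2,7}=I^{(2)}_{3,6}=I^{(2)}_{4,5}$) is likewise left unproven. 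As it stands, your proposal is a correct proof of the implication b) $\Rightarrow$ c), not of the theorem.
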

\begin{proof}
We start with a geometric interpretation of the addition law on a generic curve $\cC\in\cE$. Recall that the pencil $\cE$ can be reduced to a pencil of cubic curves by means of the quadratic Cremona map $\phi$ based at $p_k,p_9,p_{10}$ for some $k=1,\ldots,8$. Lines in the target plane $\bbP^2_2$, where the cubic pencil is considered, correspond in the source plane $\bbP^2_1$ of the pencil $\cE$ to conics through $p_k,p_9,p_{10}$. Now let $p,q,r,s\in\cC$, then, assuming that neither of the points $p_k,p_9,p_{10}$ is among $p,q,r,s$, we have: 
$$
p-r=s-q \;\; \Leftrightarrow \;\; p+q=r+s \;\; \Leftrightarrow \;\; \big(\phi(p)\phi(q)\big)\cap \big(\phi(r)\phi(s)\big)\in \phi(\cC).
$$
The geometry of the pencil $\cE$ ensures the existence of a large number of quadruples of base points which, together with $p_9,p_{10}$, lie on a conic. Namely, the following sextuples are conconical: 
\begin{align}
& (p_1,p_2,p_7,p_8,p_9,p_{10})\;\; {\rm because} \;\; p_1\leftrightarrow p_8,\;\; p_2\leftrightarrow p_7 \;\;{\rm under}\;\; \sigma, \label{conic 1}\\
& (p_1,p_3,p_6,p_8,p_9,p_{10})\;\; {\rm because} \;\; p_1\leftrightarrow p_8,\;\; p_3\leftrightarrow p_6 \;\;{\rm under}\;\; \sigma, \label{conic 2}\\
& (p_1,p_4,p_5,p_8,p_9,p_{10})\;\; {\rm because} \;\; p_1\leftrightarrow p_8,\;\; p_4\leftrightarrow p_5 \;\;{\rm under}\;\; \sigma, \label{conic 3}\\
& (p_2,p_3,p_6,p_7,p_9,p_{10})\;\; {\rm because} \;\; p_2\leftrightarrow p_7,\;\; p_3\leftrightarrow p_6 \;\;{\rm under}\;\; \sigma, \label{conic 4}\\
& (p_2,p_4,p_5,p_7,p_9,p_{10})\;\; {\rm because} \;\; p_2\leftrightarrow p_7,\;\; p_4\leftrightarrow p_5 \;\;{\rm under}\;\; \sigma, \label{conic 5}\\
& (p_3,p_4,p_5,p_6,p_9,p_{10})\;\; {\rm because} \;\; p_3\leftrightarrow p_6,\;\; p_4\leftrightarrow p_5 \;\;{\rm under}\;\; \sigma. \label{conic 6}
\end{align}
The sextuples \eqref{conic 1}, \eqref{conic 4} and \eqref{conic 6} lie on reducible conics $\ell(p_1,p_7,p_{10})\cup \ell(p_2,p_8,p_9)$, \linebreak $\ell(p_2,p_6,p_{10})\cup \ell(p_3,p_7,p_9)$ and $\ell(p_3,p_5,p_{10})\cup \ell(p_4,p_6,p_9)$, respectively. One has, additionally, two more sextuples lying on reducible conics:
\begin{align}
& (p_1,p_4,p_6,p_7,p_9,p_{10})\;\;- \;\; {\rm on\;\;a\;\;reducible\;\;conic} \;\; \ell(p_1,p_7,p_{10})\cup \ell(p_4,p_6,p_9), \label{conic 7}\\
& (p_2,p_3,p_5,p_8,p_9,p_{10})\;\;- \;\; {\rm on\;\;a\;\;reducible\;\;conic} \;\; \ell(p_3,p_5,p_{10})\cup \ell(p_2,p_8,p_9). \label{conic 8}
\end{align}
%and
%\begin{align*}
%& (p_1,p_3,p_7,p_9,p_{10})\;\;- \;\; {\rm on\;\;a\;\;reducible\;\;conic} \;\; (p_1,p_7,p_{10})\cup (p_3,p_7,p_9), \\
%& (p_3,p_5,p_7,p_9,p_{10})\;\;- \;\; {\rm on\;\;a\;\;reducible\;\;conic} \;\; (p_3,p_5,p_{10})\cup (p_3,p_7,p_9), \\
%\end{align*}
%\begin{align*}
%& (p_1,p_4,p_6,p_7,p_9,p_{10})\;\;- \;\; {\rm on\;\;a\;\;reducible\;\;conic} \;\; (p_2,p_6,p_{10})\cup (p_4,p_6,p_9), \\
%& (p_1,p_2,p_7,p_8,p_9,p_{10})\;\;- \;\; {\rm on\;\;a\;\;reducible\;\;conic} \;\; (p_2,p_6,p_{10})\cup (p_2,p_8,p_9), \\
%\end{align*}
\begin{itemize}
\item 
From \eqref{conic 2}, \eqref{conic 8}, \eqref{conic 7}, \eqref{conic 5} there follows:
$$
\left\{ \begin{array}{l}
C(p_1,p_6,p_3,p_9,p_{10})\cap C(p_2,p_5,p_3,p_9,p_{10})\ni p_8,\\
C(p_1,p_6,p_4,p_9,p_{10})\cap C(p_2,p_5,p_4,p_9,p_{10})\ni p_7, \\
C(p_1,p_6,p_7,p_9,p_{10})\cap C(p_2,p_5,p_7,p_9,p_{10})\ni p_4, \\
C(p_1,p_6,p_8,p_9,p_{10})\cap C(p_2,p_5,p_8,p_9,p_{10})\ni p_3.
\end{array}\right.
$$
We explain how these relations are used, taking the first one as example. The intersection $C(p_1,p_6,p_3,p_9,p_{10})\cap C(p_2,p_5,p_3,p_9,p_{10})$ consists of $p_3$, $p_9$, $p_{10}$, and $p_8$. Upon the quadratic Cremona map $\phi$ based at $p_3,p_9,p_{10}$, this means that the lines $(q_1q_6)$ and $(q_2q_5)$ intersect at $q_8$, where $q_i=\phi(p_i)$ (the blow-ups of other three intersection points do not belong to the proper image of the conics).  The point $q_8$ is one of the base points of the cubic pencil $\phi(\cE)$.
Thus, the four relations above imply
\begin{equation}
I^{(2)}_{1,k}\circ I^{(2)}_{2,k}=I^{(2)}_{5,k}\circ I^{(2)}_{6,k}, \quad k=3,4,7,8.
\end{equation}

\item From \eqref{conic 6}, \eqref{conic 4},  \eqref{conic 8} there follows:
$$
\left\{ \begin{array}{l}
C(p_2,p_6,p_1,p_9,p_{10})\cap C(p_3,p_5,p_1,p_9,p_{10})\supset (p_1p_9), \\
C(p_2,p_6,p_4,p_9,p_{10})\cap C(p_3,p_5,p_4,p_9,p_{10})\supset (p_4p_9), \\
C(p_2,p_6,p_7,p_9,p_{10})\cap C(p_3,p_5,p_7,p_9,p_{10})\supset (p_7p_9), \\
C(p_2,p_6,p_8,p_9,p_{10})\cap C(p_3,p_5,p_8,p_9,p_{10})\supset (p_8p_9).
\end{array}\right.
$$
Again, we explain how these relations are used, taking the first one as example. The intersection $C(p_2,p_6,p_1,p_9,p_{10})\cap C(p_3,p_5,p_1,p_9,p_{10})$ consists of the point $p_{10}$ and the line $(p_1p_9)$. Upon the quadratic Cremona map $\phi$ based at $p_1,p_9,p_{10}$,
the point $p_{10}$ is blown up to a line which does not belong to the proper image of the conics, while the line $(p_1p_9)$ is blown down to the point $q_{10}$ through which the proper images of the both conics pass. Thus, the lines $(q_2q_6)$ and $(q_3q_5)$ intersect at  $q_{10}$, which is a base point of the pencil $\phi(\cE)$. Summarizing, the four relations above imply
\begin{equation}
I^{(2)}_{2,k}\circ I^{(2)}_{3,k}=I^{(2)}_{5,k}\circ I^{(2)}_{6,k}, \quad k=1,4,7,8.
\end{equation}

\item 
From \eqref{conic 2} , \eqref{conic 3}, \eqref{conic 4}, \eqref{conic 5}, there follows:
$$
\left\{ \begin{array}{l}
C(p_3,p_6,p_1,p_9,p_{10})\cap C(p_4,p_5,p_1,p_9,p_{10})\ni p_8, \\
C(p_3,p_6,p_2,p_9,p_{10})\cap C(p_4,p_5,p_2,p_9,p_{10})\ni p_7,\\
C(p_3,p_6,p_7,p_9,p_{10})\cap C(p_4,p_5,p_7,p_9,p_{10})\ni p_2,\\
C(p_3,p_6,p_8,p_9,p_{10})\cap C(p_4,p_5,p_8,p_9,p_{10})\ni p_1.
\end{array}\right.
$$
Exactly as before, these four relations imply
\begin{equation}
I^{(2)}_{3,k}\circ I^{(2)}_{4,k}=I^{(2)}_{5,k}\circ I^{(2)}_{6,k}, \quad k=1,2,7,8.
\end{equation}

\item
In exactly the same way we prove that
\begin{equation}
I^{(2)}_{1,k}\circ I^{(2)}_{2,k}=I^{(2)}_{6,k}\circ I^{(2)}_{7,k}, \quad k=3,4,5,8.
\end{equation}
and
\begin{equation}
I^{(2)}_{1,k}\circ I^{(2)}_{2,k}=I^{(2)}_{7,k}\circ I^{(2)}_{8,k}, \quad k=3,4,5,6.
\end{equation}
This completes the proof of coincidence of all representations \eqref{f quartic}, as well as the middle part of the singularity confinement patterns \eqref{quart sing conf 1}, \eqref{quart sing conf 2}.

\item One sees immediately that $I^{(1)}_9\circ \sigma$ is a shift with respect to the addition law on the curves of $\cE$, sending $p_1\to p_2\to p_3\to p_4$, while $\sigma\circ I^{(1)}_{10}$ is a shift sending $p_5\to p_6\to p_7\to p_8$. Therefore, these shifts must coincide with $f$. This proves \eqref{f quartic 2} and the middle part of the  singularity confinement pattern \eqref{quart sing conf 3}.

\item Collecting all the results, we see that $\cI(f)=\{p_4,p_8,p_{10}\}$ and $\cI(f^{-1})=\{p_1,p_5,p_{9}\}$, so that $f$ must be a quadratic Cremona map. 
\item It remains to show the blow-up and blow-down relations in the singularity confinement patterns \eqref{quart sing conf 1}--\eqref{quart sing conf 3}. 
To see the blow-down relations on the left, we use the representation $f=I_9^{(1)} \circ \sigma$. The involution $\sigma$ is a projective automorphism and has no singularities, so it suffices to study the blowing down patterns of $I_9^{(1)}$.  By definition of the map $I_9^{(1)}$, it is clear that it blows down the line $(p_1p_9)$ to the point $p_1$, and blows down the line $(p_5p_9)$ to the point $p_5$. Since $f$ is a quadratic Cremona map, the same holds true for the involution $I_9^{(1)}$; there follows that $I_9^{(1)}$ must blow up $p_9$ to the line $(p_1p_5)$, which finishes the proof. For the blow-up relations on the right part of \eqref{quart sing conf 1}--\eqref{quart sing conf 3}, we use $f=\sigma \circ I_{10}^{(1)}$ in a similar manner.
\end{itemize}
\end{proof}

We now turn to canonical forms of projectively symmetric quartic pencils with two double points, which can be achieved by projective automorphisms of $\bbP^2$. The most popular one corresponds to the choice $p_9=[0:1:0]$, $p_{10}=[1:0:0]$, so that the quartic curves become \emph{biquadratic} ones. Denote the inhomogeneous coordinates on the affine part $\bbC^2\subset\bbP^2$ by $(u,v)$. We can arrange $p_2=(1,b)$, $p_7=(b,1)$, $p_3=(a,-1)$, $p_6=(-1,a)$, so that $\ell=\{u-v=0\}$, $C=(p_2p_7)\cap(p_3p_6)=[-1:1:0]$, and $\sigma$ is the Euclidean reflection at the line $\ell$, 
$$
\sigma(u,v)=(v,u).
$$
The pencil $\cE$ of biquadratics reads
\begin{equation}\label{pencil QRT root 2}
\alpha(\alpha+1)(u^2+v^2-1)-(\alpha+1)uv+\beta(u+v)-\beta^2-\lambda(u^2-1)(v^2-1)=0,
\end{equation}
and is symmetric under $\sigma$. Involutions $I^{(1)}_9$ and $I_{10}^{(1)}$ are nothing but the standard vertical and horizontal QRT switches for this pencil, and 
the map $f=I^{(1)}_9\circ \sigma=\sigma\circ I_{10}^{(1)}$ of  Theorem \ref{Th f quartic} is given by 
\begin{equation}\label{QRT root 2}
\renewcommand{\arraystretch}{1.4}
f:(u,v)\mapsto (\t u,\t v), \quad 
\left\{\begin{array}{l} \t u=v, \\ \t v=\dfrac{\alpha uv+\beta u-1}{u-\alpha v-\beta}. \end{array}\right.
\end{equation}
It is the ``QRT root'' of $f^2=I^{(1)}_9\circ I_{10}^{(1)}$.

To arrive at another canonical form of projectively symmetric quartic pencils with two double points, we perform a linear projective change of variables in $\bbP^2$, given in the non-homogeneous coordinates by
\begin{equation}\label{change 2}
u=\frac{1+\beta x+y}{x}, \quad v=\frac{1+\beta x-y}{x}.
\end{equation}
Upon substitution \eqref{change 2} and some straightforward simplifications, we come to the following system (cf. \cite{PSZ}):
\begin{equation}\label{QRT to Kahan 2}
\renewcommand{\arraystretch}{1.3}
\left\{\begin{array}{l}
\t x-x=x\t y+\t xy, \\
\t y -y=(1-2\alpha)-2\alpha\beta(x+\t x)+\big(1-\beta^2(1+2\alpha)\big)x\t x-(1+2\alpha)y\t y.
\end{array}\right.
\end{equation}
In order to give an intrinsic geometric characterization of this canonical form, we will need the following observation.

\begin{proposition} \label{prop double Pascal}
The following five intersection points are collinear:
$$
(p_1p_8)\cap (p_2p_7), \;\; (p_1p_5)\cap (p_3p_7), \;\; (p_2p_5)\cap (p_3p_8), \;\; (p_1p_6)\cap (p_4p_7), \;\;  (p_2p_6)\cap (p_4p_8).
$$
\end{proposition}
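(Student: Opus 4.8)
The plan is to exploit the fact, built into the construction of the pencil, that the eight simple base points $p_1,\dots,p_8$ all lie on the single conic $\cC$ used to define $p_1,p_4,p_5,p_8$ (the points $p_2,p_3,p_6,p_7$ lie on $\cC$ by choice, and $p_1,p_4,p_5,p_8$ are second intersections of lines with $\cC$). This reduces the five-fold collinearity to a purely projective incidence statement about eight points on a conic, which I would attack with Pascal's theorem combined with the involution $\sigma$.

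First I would realize the triple $P_1=(p_1p_8)\cap(p_2p_7)$, $P_2=(p_1p_5)\cap(p_3p_7)$, $P_3=(p_2p_5)\cap(p_3p_8)$ as a Pascal output. One checks that the six lines defining $P_1,P_2,P_3$ close up into the hexagon $p_1\,p_8\,p_3\,p_7\,p_2\,p_5$ inscribed in $\cC$, whose three pairs of opposite sides meet in $(p_1p_8)\cap(p_7p_2)=P_1$, $(p_8p_3)\cap(p_2p_5)=P_3$, and $(p_3p_7)\cap(p_5p_1)=P_2$. Pascal's theorem then gives that $P_1,P_2,P_3$ are collinear; call their common line $m$.

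The observation that closes the argument is that $P_1$ coincides with the center $C$ of $\sigma$. Indeed, $\sigma$ is a harmonic homology with center $C$ and axis $\ell=(AB)$, so each point is collinear with its image and with $C$; since $\{p_1,p_8\}$ and $\{p_2,p_7\}$ are $\sigma$-conjugate pairs, both $(p_1p_8)$ and $(p_2p_7)$ pass through $C$, forcing $P_1=C$. Hence the Pascal line $m$ passes through the center of $\sigma$, and every line through the center of a homology is $\sigma$-invariant, so $\sigma(m)=m$.

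Finally I would transport the collinear triple by $\sigma$. Using the pairings $1\!\leftrightarrow\!8$, $2\!\leftrightarrow\!7$, $3\!\leftrightarrow\!6$, $4\!\leftrightarrow\!5$, a direct check gives $\sigma(P_2)=(p_4p_8)\cap(p_2p_6)=P_5$ and $\sigma(P_3)=(p_4p_7)\cap(p_1p_6)=P_4$, while $\sigma(P_1)=P_1$. Applying $\sigma$ to $\{P_1,P_2,P_3\}\subset m$ and using $\sigma(m)=m$ yields $\{P_1,P_5,P_4\}\subset m$, so all five points lie on $m$. The main obstacle is purely combinatorial: singling out the correct inscribed hexagon among the many available, and noticing that the resulting Pascal line happens to run through $C=P_1$, which is precisely what lets the $\sigma$-symmetry upgrade three collinear points to five.
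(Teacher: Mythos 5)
Your proof is correct and is essentially the paper's own argument: the paper applies Pascal's theorem to the hexagon $(p_1,p_5,p_2,p_7,p_3,p_8)$ (the same hexagon you use, up to cyclic relabeling) and to its $\sigma$-image $(p_1,p_6,p_2,p_7,p_4,p_8)$, then concludes that the two Pascal lines coincide because they correspond under $\sigma$ and share the point $(p_1p_8)\cap(p_2p_7)=C$. Your variant---invoking Pascal only once and transporting $P_2,P_3$ along the $\sigma$-invariant line through the center $C$---is the same argument read pointwise rather than linewise, and it has the small merit of making explicit the homology property (lines through $C$ are $\sigma$-invariant) that the paper's final ``therefore they must coincide'' leaves implicit.
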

\begin{proof}
The triple of intersection points
$$
(p_1p_8)\cap (p_2p_7), \;\; (p_1p_5)\cap (p_3p_7), \;\; (p_2p_5)\cap (p_3p_8)
$$
lies on the Pascal line for the hexagon $(p_1,p_5,p_2,p_7,p_3,p_8)$, while the triple of intersection points
$$
(p_1p_8)\cap (p_2p_7),  \;\; (p_1p_6)\cap (p_4p_7), \;\;  (p_2p_6)\cap (p_4p_8)
$$
lies on the Pascal line for the hexagon $(p_1,p_6,p_2,p_7,p_4,p_8)$. These hexagons correspond under $\sigma$, therefore this holds true also for their Pascal lines. Moreover, the Pascal lines share the point $(p_1p_8)\cap (p_2p_7)=C$, therefore they must coincide.
\end{proof}

We will call the line containing the five intersection points from Proposition \ref{prop double Pascal} the \emph{double Pascal line}.

\begin{theorem}
For a projectively symmetric pencil of quartic curves with two double points, perform a linear projective transformation of $\ \bbP^2$ sending the double Pascal line to infinity. By a subsequent affine change of coordinates $(x,y)$ on the affine part $\bbC^2\subset\bbP^2$, arrange that $\ell$ coincides with the axis $y=0$, $p_9=(0,-1)$, and $p_{10}=(0,1)$. In these coordinates, the map $f:(x,y)\mapsto (\t x,\t y)$ defined by \eqref{f quartic}--\eqref{f quartic 2} is characterized by the following property.  There exist $a_0,\ldots,a_3\in\bbC$ with $a_0+a_3=2$ such that $f$ admits a representation through two bilinear equations of motion of the form 
\begin{equation}\label{Kahan 2}
\renewcommand{\arraystretch}{1.2}
\left\{\begin{array}{l}
\t x-x=x\t y+\t xy, \\
\t y -y=a_0-a_1(x+\t x)-a_2x\t x-a_3y\t y.
\end{array}\right.
\end{equation}
\end{theorem}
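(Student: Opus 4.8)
The plan is to reduce the statement to the explicit biquadratic model recorded just above it, so that the only genuinely new content is the geometric identification of the normalizing change of coordinates. In the biquadratic normal form \eqref{pencil QRT root 2} one has $\sigma(u,v)=(v,u)$ and the map $f$ of Theorem \ref{Th f quartic} is the QRT-root map \eqref{QRT root 2}; moreover the substitution \eqref{change 2} has already been seen to transform \eqref{QRT root 2} into the bilinear system \eqref{QRT to Kahan 2}. Reading off the coefficients of \eqref{QRT to Kahan 2} gives
\[
a_0=1-2\alpha,\quad a_1=2\alpha\beta,\quad a_2=\beta^2(1+2\alpha)-1,\quad a_3=1+2\alpha,
\]
which is precisely the shape \eqref{Kahan 2} and satisfies $a_0+a_3=2$ identically. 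Hence the whole statement will follow once I show that the coordinates $(x,y)$ introduced by \eqref{change 2} are the normalized coordinates described in the theorem.

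To this end I would check the three normalization conditions for the linear projective map underlying \eqref{change 2}, namely $[x_0:x_1:x_2]\mapsto[\,x_1:x_0+\beta x_1+x_2:x_0+\beta x_1-x_2\,]$. First, since $u-v=2y/x$, the line $\{y=0\}$ is mapped onto the axis $\ell=\{u=v\}$, so in the $(x,y)$-coordinates $\ell$ is the $x$-axis. Second, computing the preimages of the two double points at infinity puts them at $(0,\pm1)$, matching $p_9,p_{10}$ up to interchanging the two labels. Third --- and this is the only substantial point --- I must verify that \eqref{change 2} sends the line at infinity of the $(x,y)$-plane to the double Pascal line of Proposition \ref{prop double Pascal}. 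Setting $x_0=0$ in the projective form yields $u_1+u_2=2\beta\,x_1=2\beta\,u_0$, i.e.\ the image line is $\{u+v=2\beta\}$; thus the claim reduces to the identity that the double Pascal line equals $\{u+v=2\beta\}$.

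The double Pascal line is $\sigma$-invariant, since its two defining hexagons are interchanged by $\sigma$, and it passes through $C=(p_1p_8)\cap(p_2p_7)$, the point at infinity in the direction $(1,-1)$ perpendicular to the axis; it is therefore automatically of the form $\{u+v=c\}$, and only the constant $c$ is at stake. I would determine $c$ by writing down the eight simple base points of \eqref{pencil QRT root 2} explicitly --- they are the finite intersections of $F=0$ with the four lines $u=\pm1$, $v=\pm1$ constituting $G=0$, distributed into the $\sigma$-pairs $(p_1,p_8),(p_2,p_7),(p_3,p_6),(p_4,p_5)$ as fixed by Theorem \ref{Th f quartic} --- and then evaluating $u+v$ at one of the four finite points listed in Proposition \ref{prop double Pascal}, for instance $(p_2p_5)\cap(p_3p_8)$. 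The expected outcome of this evaluation is $c=2\beta$.

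I expect this last evaluation to be the main obstacle. Although conceptually routine, it requires the roots of the two quadratics cut out on $u=\pm1$ together with a cancellation that produces the clean value $2\beta$, independent both of the pencil parameter and of which of the five collinear points is used. Once $c=2\beta$ is confirmed, the three normalization conditions hold, the coordinates of \eqref{change 2} coincide with those prescribed by the theorem, and \eqref{QRT to Kahan 2} is literally \eqref{Kahan 2} with $a_0+a_3=2$. Finally, for the characterization one notes that the family of projectively symmetric quartic pencils with two double points carries exactly two moduli $(\alpha,\beta)$, matching the three coefficients of \eqref{Kahan 2} modulo the residual scaling $x\mapsto sx$ that preserves the normalization and fixes $a_0,a_3$; hence every bilinear map \eqref{Kahan 2} with $a_0+a_3=2$ arises in this way.
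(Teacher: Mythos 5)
Your route is genuinely different from the paper's proof, which consists of a single sentence (``a symbolic computation with MAPLE''), and its architecture is correct: you reduce the theorem to facts the paper records before stating it --- that in the biquadratic normal form \eqref{pencil QRT root 2} the map of Theorem \ref{Th f quartic} is \eqref{QRT root 2}, and that the substitution \eqref{change 2} turns it into \eqref{QRT to Kahan 2}, i.e.\ into \eqref{Kahan 2} with $a_0=1-2\alpha$, $a_1=2\alpha\beta$, $a_2=\beta^2(1+2\alpha)-1$, $a_3=1+2\alpha$, so $a_0+a_3=2$ --- plus exactly one new geometric fact: that the double Pascal line of Proposition \ref{prop double Pascal} is $\{u+v=2\beta\}$. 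You also correctly observe that the theorem's normalization is rigid up to $x\mapsto sx$, which preserves the form \eqref{Kahan 2} and fixes $a_0,a_3$, so exhibiting one normalized coordinate system suffices. But as written the proof has a genuine gap at precisely the step you flag: the value $c=2\beta$ is announced as an ``expected outcome,'' never derived. This is the only ingredient of your argument not already contained in the paper's narrative; until it is proved, nothing shows that \eqref{change 2} realizes the theorem's normalization, and the entire reduction is conditional.

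The gap does close, and more easily than you fear, because the eight simple base points are rational in $(\alpha,\beta)$: restricting the conic $F=0$ of \eqref{pencil QRT root 2} to $v=1$ gives $\alpha(\alpha+1)u^2+(\beta-\alpha-1)u+\beta(1-\beta)=0$, whose discriminant is the perfect square $\bigl(\beta(2\alpha+1)-(\alpha+1)\bigr)^2$, with roots $u=\beta/(\alpha+1)$ and $u=(1-\beta)/\alpha$; on $v=-1$ the roots are $u=\beta/(\alpha+1)$ and $u=-(1+\beta)/\alpha$. The labeling forced by the construction in Section \ref{section quartic} ($p_9\in(p_3p_7)$, so $p_3,p_7$ share their $u$-coordinate; $p_1,p_5$ are the second intersections of $\cC$ with the horizontal lines through $p_7,p_3$) gives $p_7=(\beta/(\alpha+1),1)$, $p_3=(\beta/(\alpha+1),-1)$, $p_1=((1-\beta)/\alpha,1)$, $p_5=(-(1+\beta)/\alpha,-1)$, and their $\sigma$-images for $p_2,p_6,p_8,p_4$. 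Then $(p_3p_7)$ is the vertical line $u=\beta/(\alpha+1)$, the line $(p_1p_5)$ has direction $(1,\alpha)$, and $(p_1p_5)\cap(p_3p_7)=\bigl(\beta/(\alpha+1),\,\beta(2\alpha+1)/(\alpha+1)\bigr)$, whose coordinate sum is $2\beta$; combined with your observation that the double Pascal line is of the form $\{u+v=c\}$, this single evaluation completes your proof. Two further remarks. First, with the paper's coordinate convention $[u:v:w]$, $w$ homogenizing (this is what makes $C=[-1:1:0]$ the infinite point in direction $(1,-1)$), the preimages of $p_9=[0:1:0]$ and $p_{10}=[1:0:0]$ under \eqref{change 2} are exactly $(0,-1)$ and $(0,1)$: no relabeling is needed --- which matters, because your ``up to interchanging the two labels'' is not innocent: swapping $p_9\leftrightarrow p_{10}$ replaces $f$ of \eqref{f quartic}--\eqref{f quartic 2} by $f^{-1}$, and restoring the form \eqref{Kahan 2} would require the additional conjugation $(x,y)\mapsto(x,-y)$. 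Second, your closing dimension count only sketches the ``every such map arises'' direction; to make it an argument, invert explicitly: $\alpha=(a_3-1)/2$, then $\beta=a_1/(2\alpha s)$ with the scale $s$ determined (generically) by matching $a_2$.
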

\begin{proof}
A symbolic computation with MAPLE.
\end{proof}

%%%%%%%%%%%%%%%%%%%%%%%%%%%%%%%%%%%%%%%%%%%%%%%%%%%

\section{Quadratic Manin maps for special pencils of the type $(6;6^1 3^2 2^3)$}
\label{section sextic}

In this section, we consider the following example considered in detail in \cite{PPS2, PZ, CMOQ4, Z}. Let $f:\bbP^2\to\bbP^2$ be the birational map given in the non-homogeneous coordinates $\X=(x,y)$ on the affine part $\bbC^2\subset\bbP^2$ by two bilinear relations between $(x,y)$ and $(\t x,\t y)=f(x,y)$:
\begin{align} \label{nahm HK}
\t \X-\X \; = \;\; & \gamma_1(\ell_2(\X)\ell_3(\t \X)+\ell_2(\t \X)\ell_3(\X))J\nabla\ell_1 \nonumber \\
+ & \gamma_2(\ell_1(\X)\ell_3(\t \X)+\ell_1(\t \X)\ell_3(\X))J\nabla\ell_2 \nonumber \\
+ & \gamma_3(\ell_1(\X)\ell_2(\t \X)+\ell_1(\t \X)\ell_2(\X) )J\nabla\ell_3,
\end{align}
where $(\gamma_1,\gamma_2,\gamma_3)=(1,2,3)$, $\ell_i(\X)=a_ix+b_iy$ are linear forms with $a_i,b_i\in\bbC$, and $J=\begin{pmatrix}
0&1\\ -1&0 \end{pmatrix}$. This map is the Kahan discretization of the quadratic flow
\begin{equation} \label{nahm}
\dot \X =\gamma_1\ell_2(\X)\ell_3(\X)J\nabla\ell_1+\gamma_2\ell_1(\X)\ell_3(\X)J\nabla\ell_2+\gamma_3\ell_1(\X)\ell_2(\X)J\nabla\ell_3,
\end{equation}
which can be put as 
\begin{equation} \label{nahm}
\dot \X =\frac{1}{(\ell_1(\X))^{\gamma_1-1}(\ell_2(\X))^{\gamma_2-1}(\ell_3(\X))^{\gamma_3-1}}J\nabla H_0(\X),
\end{equation}
where
\begin{equation} \label{nahm H0}
H_0(\X) =(\ell_1(\X))^{\gamma_1}(\ell_2(\X))^{\gamma_2}(\ell_3(\X))^{\gamma_3}.
\end{equation}
Integrability of the Kahan discretization \eqref{nahm HK} was demonstrated in \cite{PPS2, PZ, CMOQ4} for $(\gamma_1,\gamma_2,\gamma_3)=(1,1,1), (1,1,2)$, and $(1,2,3)$. Sections \ref{section cubic} and \ref{section quartic} deal with generalizations of the cases $(\gamma_1,\gamma_2,\gamma_3)=(1,1,1), (1,1,2)$, the present one deals with $(\gamma_1,\gamma_2,\gamma_3)=(1,2,3)$.

As shown in \cite{PPS2, PZ, CMOQ4}, the map $f$ defined by \eqref{nahm HK} admits an integral of motion:
\begin{equation}
\label{123_integral}
H(\X)=\dfrac{H_0(\X)}{L_+(\X)L_-(\X)M_+(\X)M_-(\X)Q(\X)},
\end{equation}
where
\begin{eqnarray*}
L_{\pm}(\X) & = & 1\pm 3d_{31}\ell_2(\X), \\
M_\pm (\X) & = & 1 \pm\left(3d_{23}\ell_1(\X)-d_{12}\ell_3(\X)\right),\\
Q(\X) & = & 1-\big(9d_{31}^2\ell_2^2(\X)+16d_{12}^2\ell_3^2(\X)\big),
\end{eqnarray*}
with
\begin{equation*}
d_{ij}=a_ib_j-a_jb_i.
\end{equation*}
Thus, the phase space of $f$ is foliated by the pencil of invariant curves 
\begin{equation}\label{sextic pencil 123}
C_\lambda=\big\{ H_0(\X)-\lambda L_+(\X)L_-(\X)M_+(\X)M_-(\X)Q(\X)=0\big\}.
\end{equation}
The pencil has $\deg=6$ and contains two reducible curves: $C_0=\{H_0(x,y)=0\}$, consisting of the lines $\{\ell_i(x,y)=0\}$, $i=1,2,3$, with multiplicities $1,2,3$, and  $C_\infty$,
consisting of the conic $Q(x,y)=0$ and the four lines $L_{\pm}(x,y)=0$, $M_{\pm}(x,y)=0$.
All curves $C_{\lambda}$ pass through the set of base points which is defined as $C_0\cap C_{\infty}$. 
One easily computes the 11 (distinct) base points of the pencil. They are given by:
\begin{itemize}
\item six base points of multiplicity $1$ on the line $\ell_1=0$:
$$
p_1=\Big(-\frac{b_1}{5d_{12}d_{31}},\frac{a_1}{5d_{12}d_{31}}\Big),\quad 
p_2=\Big(-\frac{b_1}{3d_{12}d_{31}},\frac{a_1}{3d_{12}d_{31}}\Big),\quad
p_3=\Big(-\frac{b_1}{d_{12}d_{31}},\frac{a_1}{d_{12}d_{31}}\Big),
$$
$$
p_4=-p_3,\quad 
p_5=-p_2,\quad
p_6=-p_1;
$$

\item three base points of multiplicity $2$ on the line $\ell_2=0$:
$$
p_7=\Big(-\frac{b_2}{4d_{12}d_{23}},\frac{a_2}{4d_{12}d_{23}}\Big),\quad p_8=[b_2:-a_2:0], \quad p_9=-p_7;
$$

\item and two base points of multiplicity $3$ on the line $\ell_3=0$:
$$
p_{10}=\Big(-\frac{b_3}{3d_{23}d_{31}},\frac{a_3}{3d_{23}d_{31}}\Big), \quad p_{11}=-p_{10}. 
$$
\end{itemize}
The map $f$ is a quadratic Cremona map with the indeterminacy points $\cI(f)=\{p_6,p_9,p_{11}\}$ and $\cI(f^{-1})=\{p_1,p_7,p_{10}\}$, and has the following singularity confinement patterns:
\begin{align*}
& (p_9p_{11})\to p_1 \to p_2 \to p_3 \to p_4 \to p_5 \to p_6 \to (p_7p_{10}),\\
& (p_6p_{11}) \to p_7 \to p_8 \to p_9 \to (p_1p_{10}),\\
& (p_6p_9) \to p_{10} \to p_{11} \to (p_1p_7).
\end{align*}

%%%%%%%%%%%%%%%%%%%%%%%%%%%%%%%%%
\begin{figure}[htbp]
\includegraphics[scale=0.7]{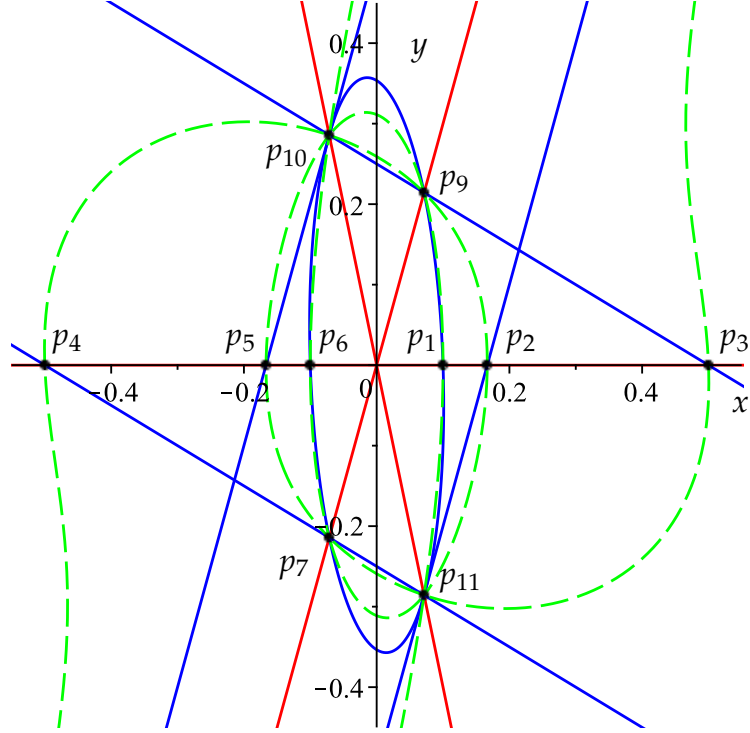}
\put(-5,120){$x$}
\put(-125,255){$y$}
\put(-127,145){$p_1$}
\put(-90,145){$p_2$}
\put(-10,145){$p_3$}
\put(-160,145){$p_6$}
\put(-195,145){$p_5$}
\put(-260,145){$p_4$}
%%%%%%%%%%%%%%%%%%%%%%%%%%%%%%
\put(-175,60){$p_7$}
\put(-115,205){$p_9$}
%%%%%%%%%%%%%%%%%%%%%%%%%%%%%%
\put(-180,215){$p_{10}$}
\put(-115,55){$p_{11}$}
\caption{The curves $C_0$, $C_{\infty}$, $C_{-0.002}$ of the sextic pencil (in red, blue and green, respectively) for $\ell_1(\X)=y/6$, $\ell_2(\X)=3x-y$, $\ell_3(\X)=4x+y$.}
\end{figure}

Unlike the previous two sections, we will not derive here general geometric conditions for the base points of the pencil which ensure that certain compositions of Manin involutions are quadratic Cremona maps. Rather, we will give here the corresponding statements for the pencil \eqref{sextic pencil 123}.
\begin{theorem}
The map $f$ can be represented as compositions of the Manin involutions in the following ways:
$$
f=I^{(4)}_{i,k,m}\circ I^{(4)}_{j,k,m}=I^{(3)}_{i,n}\circ I^{(3)}_{j,n}
$$
for any $(i,j)\in\{(1,2),(2,3),(3,4),(4,5),(5,6)\}$, $k\in\{1,\ldots,6\}\setminus \{i,j\}$, and $m\in\{7,8,9\}$, $n\in\{10,11\}$.
\end{theorem}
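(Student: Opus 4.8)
The plan is to transport the computation to the group law on the elliptic fibres and reduce the statement to a one-line cancellation of divisor classes, exactly in the spirit of the proof of Theorem \ref{Th f quartic}. First I would record that every Manin involution of the three families acts as a reflection of the group on each smooth member $\cC\in\cE$. By Proposition \ref{th sextic vs cubic b}, the birational map $\phi=\phi''\circ\phi'$ conjugates $I^{(4)}$, $I^{(3)}$ and $I^{(2)}$ to ordinary Manin involutions of the cubic pencil in $\bbP_2^2$; since $\phi$ restricts to an isomorphism of $\cC$ onto the corresponding cubic, and ordinary Manin involutions are reflections $\tilde p\mapsto \tilde\kappa-\tilde p$, each of our involutions acts on $\cC$ as a reflection $p\mapsto\kappa-p$ of $(\cC,+)$, with $\kappa$ depending only on its defining sub-pencil. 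Concretely, if a member of the defining sub-pencil of degree $d'$ cuts out on $\cC$ the fixed base divisor $B$ together with the moving pair $p,p'$, then $p+p'\sim d'H-B$ is constant, where $H$ is the line-section class; so $\kappa=d'H-B$ (as a point of $\cC$, once an origin is fixed) and $I(p)=\kappa-p$.

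The heart of the proof is then purely formal. The sub-pencils $\cQ_{i,k,m}$ and $\cQ_{j,k,m}$ defining $I^{(4)}_{i,k,m}$ and $I^{(4)}_{j,k,m}$ have the same degree and coincide in all base data, the only difference being that $p_i$ is replaced by $p_j$, both as simple base points. Hence their base divisors satisfy $B_j-B_i=p_j-p_i$ on $\cC$, and the reflection centres differ by $\kappa_i-\kappa_j=p_j-p_i$. Because the composition of two reflections $p\mapsto \kappa_i-(\kappa_j-p)$ is a translation, $I^{(4)}_{i,k,m}\circ I^{(4)}_{j,k,m}$ acts on every $\cC$ as the translation $p\mapsto p+(p_j-p_i)$, and this amount is \emph{independent of the shared indices} $k$ and $m$. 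The same argument applies verbatim to $\cK_{i,n}$ and $\cK_{j,n}$, which likewise differ only by the interchange $p_i\leftrightarrow p_j$ of a single simple base point, so $I^{(3)}_{i,n}\circ I^{(3)}_{j,n}$ is the identical translation $p\mapsto p+(p_j-p_i)$, independent of $n$. This already yields the coincidence of all the compositions listed for a fixed pair $(i,j)$.

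To identify this common translation with $f$, I would invoke the stated singularity confinement pattern, according to which $f$ is a translation of each fibre with $f(p_i)=p_{i+1}$ for $i=1,\dots,5$. In particular $p_2-p_1=p_3-p_2=\dots=p_6-p_5$, and $f$ acts on each $\cC$ as $p\mapsto p+(p_2-p_1)$. For a consecutive pair $(i,j)=(i,i+1)$ the translation found above is $p\mapsto p+(p_{i+1}-p_i)$, which therefore equals the translation induced by $f$. Since a birational map preserving the pencil and inducing a prescribed translation on every member is uniquely determined — the uniqueness already used in the proof of Theorem \ref{Th f quartic} — each composition in the statement coincides with $f$.

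The main obstacle is the first step, namely justifying the reflection property with the correct constant at the \emph{singular} base points $p_7,\dots,p_{11}$: there the members of a sub-pencil meet $\cC$ with higher multiplicity, and the set-theoretic intersection divisor need not stay constant along the sub-pencil. I would control this by passing to the blow-up $S$ and working in ${\rm Pic}(S)$ with the class \eqref{sextic divisor} and the exceptional classes $E_i$, where the proper transform of a generic member of $\cE$ is a smooth elliptic fibre and the contributions of the singular base points are fixed exceptional classes. The only thing that actually has to be checked is that these shared contributions cancel in $B_j-B_i$, leaving precisely the exceptional classes $E_i$ and $E_j$ of the two interchanged simple base points, whose restrictions to the fibre are the points $p_i$ and $p_j$. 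Equivalently, one may trace — as in the identity $I^{(2)}_{i,j}=I^{(2)}_{j,i}$ of Section \ref{Sect quartic} — how $\cQ_{i,k,m}$ and $\cK_{i,n}$ correspond under $\phi$ to pencils of lines through base points of the cubic pencil; this is the only delicate bookkeeping, and it is needed solely to confirm that the two sub-pencils entering each composition share all their data apart from the single interchanged simple base point.
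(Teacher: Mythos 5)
Your route is genuinely different from the paper's: the paper disposes of this theorem with a one-line proof (``Symbolic computation with MAPLE''), whereas you argue conceptually via the group law on the fibres. The core of your argument is sound. On the surface $S$, the proper transform of a generic member of $\cQ_{i,k,m}$ has class $4D-E_i-E_k-E_a-E_b-2E_m-2E_{10}-2E_{11}$ (with $\{a,b\}=\{7,8,9\}\setminus\{m\}$), whose intersection number with the fibre class \eqref{sextic divisor} is $2$; likewise members of $\cK_{i,n}$ cut degree-$2$ divisors on the fibre. So each involution is a reflection $p\mapsto\kappa-p$ with $\kappa$ the restriction of the corresponding Picard class, the classes of the two sub-pencils entering one composition differ by $E_j-E_i$, and each composition acts on every smooth fibre as the translation by $p_j-p_i$, independently of $k$, $m$, $n$. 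This part, including your decision to work in ${\rm Pic}(S)$ to tame the singular base points, is complete and rigorous.

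The genuine gap is the step where you ``invoke the stated singularity confinement pattern, according to which $f$ is a translation of each fibre.'' The pattern asserts only that $f(p_i)=p_{i+1}$ for $i=1,\dots,5$; it does \emph{not} assert that $f$ restricts to a translation, and that is precisely what you need both to identify the compositions with $f$ and to get $p_2-p_1=\dots=p_6-p_5$ (which you derive from the translation property, so the argument as written is circular). A priori $f$ could restrict to $p\mapsto\alpha(p)+c$ with $\alpha\neq{\rm id}$ a group automorphism of the fibre. The gap can be closed with the data at hand: any such $\alpha$ has order $n\in\{2,3,4,6\}$, and since $1+\alpha+\dots+\alpha^{n-1}=0$ in ${\rm End}(\cC)$ (an integral domain), the restriction of $f$ would satisfy $f^n={\rm id}$ on the fibre. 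But the orbit of $p_1$ on the fibre rules this out: $f^k(p_1)=p_{k+1}\neq p_1$ for $k\le 5$ (the points $E_1\cap\widetilde{\cC},\dots,E_6\cap\widetilde{\cC}$ are distinct), and $f^6(p_1)=f(p_6)$ is the point where the proper transform of $(p_7p_{10})$ meets the fibre, which differs from $p_1$ because $(D-E_7-E_{10})\cdot E_1=0$. Hence $\alpha={\rm id}$, $f$ is fibrewise the translation by $p_2-p_1=\dots=p_6-p_5$, and your uniqueness argument then finishes the proof. With this supplement (or an equivalent one, e.g.\ fixed-point-freeness of $f$ on generic fibres, noting that fixed points of the Kahan map are zeros of the vector field), your proof is correct and, unlike the paper's computational verification, explains both the identity and its independence of $k$, $m$, $n$.
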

\begin{proof}
Symbolic computation with MAPLE.
\end{proof}

\section{Conclusions} The  contribution of this paper is two-fold:
\begin{itemize}
\item Finding geometric description of Manin involutions for elliptic pencil consisting of curves of higher degree, birationally equivalent to cubic pencils (Halphen pencils of index 1).

\item Characterizing special geometry of base points ensuring that certain compositions of Manin involutions are integrable maps of low degree (quadratic Cremona maps). As particular cases, we identify some integrable Kahan discretizations as compositions of Manin involutions.
\end{itemize}

It should be mentioned that both issues can and should be studied also for Halphen pencils of index $m>1$. For instance, for a Halphen pencil of index 2, $\cP(6;p_1^2,...,p_9^2)$, one can propose the following construction of involutions $I_{p_i}$, $i=1,\ldots,9$. For any $p\in\bbP^2$ different from the base points, consider the cubic curve through $p$ and $p_2,\ldots,p_9$. Its intersection number with the Halphen's curve of degree 6 through $p$ is $3\times 6=18$. The intersections with eight base points $p_2,\ldots,p_9$ and with $p$ count as $8\times 2+1=17$, so there is exactly one remaining intersection point $p'$. We declare $p'=I_{p_1}(p)$. One can see that the so defined involutions $I_{p_i}:\bbP^2\to\bbP^2$ are Bertini involutions (of degree 17 in the generic situation). However, one can show that the birational map (6) of degree 3 from \cite{CT1} is a composition of two such involutions, due to very special geometry of the base points, involving infintely near ones. We hope to be able to identify in the future work further low degree integrable birational maps as compositions of fundamental involutions defined by elliptic pencils.

\section{Acknowledgement}
This research is supported by the DFG Collaborative Research Center TRR 109 ``Discretization in Geometry and Dynamics''.

{}

\begin{thebibliography}{}

\bibitem{BB}
L. Bayle, A. Beauville.
{\em Birational involutions of $\bf P^2$},
Asian J. Math. {\bf 4} (2000), 11--18.


\bibitem{Bertini}
E. Bertini.
{\em Ricerche sulle trasformazioni univoche involutorie nel piano}, 
Annali di Mat., {\bf 8} (1877), 244--286.

\bibitem{CT1}
A.S. Carstea, T. Takenawa.
{\em A classification of two-dimensional integrable mappings and rational elliptic surfaces},
J. Phys. A: Math. Theor. {\bf 45} (2012), 155206 (15 pp).

\bibitem{CT2}
A.S. Carstea, T. Takenawa.
{\em A note on minimization of rational surfaces obtained from birational dynamical systems},
J. Nonlin. Math. Phys., {\bf 20} (2013), Suppl. 1, 17--33.


\bibitem{CMOQ1}
E. Celledoni, R.I. McLachlan, B. Owren, G.R.W. Quispel.
{\em Geometric properties of Kahan's method}, 
J. Phys. A {\bf 46} (2013), 025201, 12 pp.

\bibitem{CMOQ2}
E. Celledoni, R.I. McLachlan, D.I. McLaren, B. Owren, G.R.W. Quispel.
{\em Integrability properties of Kahan's method}, 
J. Phys. A {\bf{47}} (2014), 365202, 20 pp.

\bibitem{CMOQ4}
E. Celledoni, R.I. McLachlan, D.I. McLaren, B. Owren, G.R.W. Quispel.
{\em Two classes of quadratic vector fields for which the Kahan map is integrable},
MI Lecture Note, Kyushu University {\bf 74} (2016), 60--62.

\bibitem{Dolgachev}
I.V. Dolgachev. 
{\em Rational surfaces with a pencil of elliptic curves},
Izv. Akad. Nauk SSSR Ser. Mat., {\bf 30}:5 (1966), 1073--1100.

\bibitem{Dui}
J.J. Duistermaat. 
{\em Discrete Integrable Systems. QRT Maps and Elliptic Surfaces},
Springer, 2010, xii+627 pp.

\bibitem{KMNOY}
K. Kajiwara, T. Masuda, M. Noumi, Y. Ohta, Y. Yamada.  
{\em Point configurations, Cremona transformations and the elliptic difference Painlev\'e equation},
S\'eminaires et Congr\`es {\bf 14} (2006), 169--198.
%(www.emis.de/journals/SC/2006/14/html/smf_sem-cong_14_169-198.html)

\bibitem{KNY}
K. Kajiwara, M. Noumi, Y. Yamada.
{\em Geometric aspects of Painlev\'e equations}, 
J. Phys. A: Math. Theor. {\bf 50} (2017) 073001 (164 pp).

\bibitem{KCMMOQ} 
P.H. van der Kamp, E. Celledoni, R.I. McLachlan, D.I. McLaren, B. Owren, G.R.W. Quispel.
{\em Three classes of quadratic vector fields for which the Kahan discretization is the root of a generalised Manin transformation},
J. Phys. A: Math. Theor. {\bf 52} (2019) 045204.

\bibitem{KMQ} 
P.H. van der Kamp, D.I. McLaren, G.R.W. Quispel.
{\em  Generalised Manin transformations and QRT maps}, {\tt  	arXiv:1806.05340 [nlin.SI].}

\bibitem{K}
W.~Kahan.
{\em Unconventional numerical methods for trajectory calculations},
Unpublished lecture notes, 1993.

\bibitem{HKY}
K. Kimura, H. Yahagi, R. Hirota, A. Ramani, B. Grammaticos, Y. Ohta.
{\em A new class of integrable discrete systems},
J. Phys. A: Math. Gen. {\bf 35} (2002) 9205--9212.

\bibitem{Manin}
Yu.I. Manin.
{\em The Tate height of points on an Abelian variety, its variants and applications},
Izv. Akad. Nauk SSSR Ser. Mat. {\bf 28}:6 (1964), 1363--1390.


\bibitem{PPS2}
M. Petrera, A. Pfadler, Yu.B. Suris.
{\em On integrability of Hirota-Kimura type discretizations},
Regular Chaotic Dyn. {\bf 16} (2011), No. 3-4, p. 245--289. 

\bibitem{PSS}
M.~Petrera, J.~Smirin, Yu.B.~Suris. 
{\em Geometry of the Kahan discretizations of planar quadratic Hamiltonian systems}, 
Proc. Royal Soc. A, {\bf 475} (2019), 20180761, 13 pp.

\bibitem{PS4}
M.~Petrera, Yu.B.~Suris. 
{\em Geometry of the Kahan discretizations of planar quadratic Hamiltonian systems. II. Systems with a linear Poisson tensor},
J. Comput. Dyn., {\bf 6} (2019), 401--408.

\bibitem{PSZ}
M. Petrera, Yu.B. Suris, R. Zander.
{\em How one can repair non-integrable Kahan discretizations},
J. Phys. A (2020, to appear), {\tt http://arxiv.org/abs/2003.12596}.


\bibitem{PZ}
M.~Petrera, R.~Zander,
{\em New classes of quadratic vector fields admitting integral-preserving Kahan-Hirota-Kimura discretizations},
J. Phys. A: Math. Theor. {\bf 50} (2017) 205203, 13 pp.

 
\bibitem{QRT}
G.R.W. Quispel, J.A.G. Roberts, C.J. Thompson. 
{\em  Integrable mappings and soliton equations II}, 
Physica D {\bf 34} (1989) 183--192.
 
\bibitem{Sakai}
H. Sakai. 
{\em Rational surfaces associated with affine root systems
and geometry of the Painlev\'e equations},
Commun. Math. Phys. {\bf 220} (2001), 165--229.

\bibitem{Tsuda}
T. Tsuda. {\em Integrable mappings via rational elliptic surfaces}, 
J. Phys. A: Math. Gen. {\bf 37} (2004), 2721--2730.

\bibitem{Ves}
A.P. Veselov.
{\em Integrable maps}, 
Russ. Math. Surv. {\bf 46} (1991) 3--45.

\bibitem{VGR}
C.M. Viallet, B. Grammaticos, A. Ramani.
{\em On the integrability of correspondences associated to integral curves},
Phys. Lett. A, {\bf 322} (2004) 186--193.

\bibitem{Z}
R. Zander.
{\em On the singularity structure of Kahan discretizations of a class of quadratic vector fields},
{\tt https://arxiv.org/abs/2003.01659}.


\end{thebibliography}
\end{document}